\def\OPTIONSmallFigs{0}
\def\OPTIONArxiv{0}
\def\OPTIONLoudLabels{0}
\gdef\OPTIONArxiv{1}
\def\OPTIONConf{1}
\let\MathRightArrow\Rightarrow %
\def\Rightarrow{\MathRightArrow}
\gdef\InsideProofFlag{0}
\gdef\CurrentProofName{}
\gdef\ProofNameForHeader{}
\gdef\HeaderProofFlag{0}
\def\OPTIONInAppendix{0}
\newcommand{\stopproofheading}{{%
       \gdef\InsideProofFlag{0}}}
  \def\ps@xheadings{%
      \let\@mkboth\markboth%
      \def\sectionmark##1{%
        \markright {%
          {%
          \ifnum \c@secnumdepth >\z@
            \thesection\quad
          \fi
          ##1}}{}}%
      \def\subsectionmark##1{%
        \markright {%
          \ifnum \c@secnumdepth >\@ne
            \thesubsection\quad
          \fi
          ##1}}
      \let\@oddfoot\@empty\let\@evenfoot\@empty%
      \def\@oddhead{\rule[-3pt]{\textwidth}{0.01in}\hspace{-\textwidth}%
                    {%
                      \slshape{%
                        \ifnum\HeaderProofFlag=1
                             Proof of \autoref{PROOF\ProofNameForHeader}%
                        ~\bf(\nameref{PROOF\ProofNameForHeader})
                          ~\normalfont\small \textvtt{\ProofNameForHeader}%
                        \else%
                             {\rightmark}%
                        \fi%
                     }%
                   }\hfil \thepage}%
      \def\@evenhead{\@oddhead}%
      \def\@evenfoot{\!\!\!\!\begin{minipage}{\textwidth}\normalfont%
          \begin{tabular*}{\textwidth}{c}%
            \rule[-23pt]{\textwidth}{0.01in}\\
            {%
              \gdef\HeaderProofFlag{0}  %
              \ifnum\InsideProofFlag=1
                   {\normalfont
                             Proof of \autoref{PROOF\CurrentProofName}%
                        ~\bf(\nameref{PROOF\CurrentProofName})
                          ~\normalfont\small \textvtt{\CurrentProofName}%
                   }%
                   \gdef\HeaderProofFlag{1}%
                   \global\edef\ProofNameForHeader{\CurrentProofName}%
              \else
                        {\tiny\slshape\today}%
              \fi%
            }%
          \end{tabular*}%
        \end{minipage}}%
      \def\@oddfoot{\@evenfoot}%
  }
    \declaretheoremstyle[
      bodyfont=\sl
    ]{mytheoremstyle}
     \declaretheorem[style=mytheoremstyle]{theorem}
    \declaretheorem[style=mytheoremstyle]{definition}
\newcommand{\addmytocentry}[2]{%
   \addtocontents{toc}{\protect\contentsline {subsubsection}{\protect\numberline {#1}#2}{\thepage}{xx}}%
}
\newcommand{\MyThmRestateHook}[3]{%
  \relax
}
\newcommand{\lxbel}[1]{\label{#1}}
\newcommand{\XLabel}[1]{%
  \ifcsname IDEMPOTFLAG#1\endcsname%
      \lxbel{PROOF#1}%
      \LoudLabel{#1}%
      \global\edef\CurrentProofName{#1}%
      \addmytocentry{\ref{PROOF#1}}{\hyperref[PROOF#1]{Proof of \thmt@thmname~(\thmt@optarg)}}%
  \else%
       \@ifundefined{r@PROOF#1}{%
          \textcolor{red}{\bf NO PROOF}
       }{%
          \hyperref[PROOF#1]{\textcolor{blue}{Go to proof}}%
       }%
       \Label{#1}%
       \addmytocentry{\ref{#1}}{\hyperref[#1]{\thmt@thmname~(\thmt@optarg)}%
      }%
      \expandafter\gdef\csname IDEMPOTFLAG#1\endcsname{d}%
   \fi}
\newcommand{\LoudLabel}[1]{\idempotentlabel{#1}%
\ifnum\OPTIONLoudLabels=1%
  \ifnum\OPTIONConf=1%
  \marginnote{\tiny\textvtt{#1}}%
  \else%
  \marginnote{\textvtt{#1}}%
  \fi%
\fi%
}
\newcommand{\idempotentlabel}[1]{%
    \ifcsname IDEMPFLAG#1\endcsname%
      \message{YYY ALREADY DEFINED: #1}
    \else%
      \message{YYZ NOT ALREADY DEFINED: #1}
      \expandafter\gdef\csname IDEMPFLAG#1\endcsname{d}%
      \label{#1}%
    \fi}
\newcommand{\Label}[1]{\LoudLabel{#1}}%
\newcommand{\FLabel}[1]{\idempotentlabel{#1}%
{\tt\scriptsize{#1}}}%
\newcommand{\Label}[1]{\idempotentlabel{#1}}%
\newcommand{\FLabel}[1]{\idempotentlabel{#1}}%
\newcommand{\mypara}[1]{\paragraph*{#1.}}  %
\newcommand{\setof}[1]{\left\{#1\right\}}
\newcommand{\extype}[1]{\ExistsSym {#1}.\:}
\renewcommand{\implies}{\mathrel{\supset}}
\newcommand{\with}{\mathrel{\land}}
\newcommand{\binc}{\mathrel{\oplus}}
\newcommand{\ctxoutsym}{\ifnum\CompactJudgments=1%
     \dashv%
  \else%
     \dashv%
  \fi}
\newcommand{\ctxout}[1]{\mathrel{\ctxoutsym}{#1}}
\newcommand{\smallblacktriangle}{\text{\textscale{0.7}{$\blacktriangleright$}}}
\newcommand{\MonnierCommaSym}{{\smallblacktriangle}}
\newcommand{\MonnierComma}[1]{{\MonnierCommaSym}_{#1}}
\newcommand{\xfev}{\mathsf{FEV}}
\newcommand{\fev}[1]{\xfev(#1)}
\newcommand{\FEV}[1]{\fev{#1}}
\newcommand{\kindnat}{\mathbb{N}}
\newcommand{\emptyspine}{\cdot}
\newcommand{\appspine}[2]{{#1}\;{#2}}
\newcommand{\emptyctx}{\cdot}
\newcommand{\sort}{\kappa}
\newcommand{\type}{{\star}}   %
\newcommand{\ind}{\mathbb N}
\newcommand{\zero}{\mathsf{zero}}
\newcommand{\xsucc}{\mathsf{succ}}
\renewcommand{\succ}[1]{\xsucc\texttt{(}{#1}\texttt{)}}
\newcommand{\unitexp}{\text{\normalfont \tt()}}
\newcommand{\unitty}{\tyname{1}}
\newcommand{\pair}[2]{\langle{#1}, {#2}\rangle}
\newcommand{\wild}{\_}
\newcommand{\injop}[1]{\mathsf{inj}_{#1}}
\newcommand{\inj}[1]{\injop{#1}\,}
\newcommand{\inl}{\inj{1}}
\newcommand{\inr}{\inj{2}}
\newcommand{\pat}{\rho}
\newcommand{\patvec}{\vec{\pat}}
\newcommand{\branch}[2]{{#1} \Rightarrow {#2}}
\newcommand{\pvec}{\patvec}
\newcommand{\Avec}{\vec{A}}
\newcommand{\caseop}{\mathsf{case}}
\newcommand{\case}[2]{\caseop({#1}, {#2})}
\newcommand{\subtypingycolor}[1]{\textcolor{dDigPurple}{#1}}
\let\subtype\undefined
\newcommand{\subtype}{\mathrel{\normalfont\texttt{\subtypingycolor{<:}}}}  %
\newcommand{\declsubtype}{\mathrel{\leq}}
\newcommand{\Ocaml}{{OCaml}\xspace}
\newcommand{\xdom}{\mathsf{dom}}
\newcommand{\dom}[1]{\xdom(#1)}
\newcommand{\xunsolved}{\mathsf{unsolved}}
\newcommand{\unsolved}[1]{\xunsolved(#1)}
  \newenvironment{bnfarray}%
             {\begin{center} ~\!\!\begin{array}[t]{lr@{~~}c@{~~}ll}}%
             {\end{array}\end{center}}
  \newenvironment{bnfarray}%
             {\begin{center} ~\!\!\begin{array}[t]{lr@{~~}c@{~~}ll}}%
             {\end{array}\end{center}}
\newcommand{\AllSym}{\forall}
\newcommand{\ExistsSym}{\exists}
\newcommand{\xAll}[1]{\AllSym#1}
\newcommand{\All}[1]{\xAll{#1}.\:}
\newcommand{\judgword}[1]{\mathit{#1}}
\newcommand{\fun}[1]{\lam{#1}}
\newcommand{\polaritysym}{\mathrm{pol}}
\newcommand{\polarity}[1]{\polaritysym({#1})}
\newcommand{\joinpolarity}[2]{\mathrm{join}({#1}, {#2}) }
\newcommand{\xnotWithExists}{\text{~not headed by $\with$ or $\ExistsSym$}}
\newcommand{\notWithExists}[1]{{#1}\xnotWithExists}
\newcommand{\xPos}{\textit{pos}}
\newcommand{\Pos}[1]{\xPos(#1)}
\newcommand{\xNegative}{\textit{neg}}
\newcommand{\Negative}[1]{\xNegative(#1)}
\newcommand{\Neg}[1]{\Negative{#1}}
\newcommand{\Not}[1]{\textit{non}{#1}}
\newcommand{\nonPos}[1]{\Not{\Pos{#1}}}
\newcommand{\nonNeg}[1]{\Not{\Neg{#1}}}
\newcommand{\xcheckingnotcase}{\textit{chk-I}}
\newcommand{\checkingnotcase}[1]{{#1}~\xcheckingnotcase}
\newcommand{\chkintro}[1]{\checkingnotcase{#1}}
\newcommand{\xnotcase}{\textrm{not a \textkw{case}}}
\newcommand{\notcase}[1]{{#1}~\xnotcase}
\newcommand{\declsubjudg}[4][\mathcolor{red}{\ast}]{\ensuremath{{#2} \entails {#3} \declsubtype^{#1} {#4}}}
\newcommand{\subjudg}[5][\mathcolor{red}{\ast}]{\ensuremath{{#2} \entails {#3} \subtype^{#1} {#4} \ctxout{#5}}}
\newcommand{\checkprop}[3]{{#1} \entails {#2}~\judgword{true} \ctxout{#3}}
\newcommand{\declcheckprop}[2]{{#1} \entails {#2}~\judgword{true}}
\newcommand{\eqsym}{\circeq}
\newcommand{\checkeq}[5]{{#1} \entails {#2} \eqsym {#3} : {#4} \ctxout{#5}}
\newcommand{\elimprop}[3]{{#1} \ctxsep {#2}  \ctxout{#3}}
\newcommand{\elimeq}[5]{{#1} \ctxsep {#2} \eqsym {#3} : {#4} \ctxout{#5}}
\newcommand{\instsym}{\subtypingycolor{\raisebox{-0.09ex}{\text{\normalfont{:}\hspace{-0.2ex}{=}}}}}
\newcommand{\instsymop}{\mathrel{\instsym}}
\newcommand{\equivsym}{\equiv}
\newcommand{\propequivsym}{\equivsym}
\newcommand{\equivjudg}[4]{\ensuremath{{#1} \entails {#2} \equivsym {#3} \ctxout{#4}}}
\newcommand{\propequivjudg}[4]{\ensuremath{{#1} \entails {#2} \propequivsym {#3} \ctxout{#4}}}
\newcommand{\instjudg}[5]{\ensuremath{{#1} \entails {#2} \instsymop {#3} : {#4} \ctxout{#5}}}
\newcommand{\chkcolor}{dBlue}
\newcommand{\syncolor}{dRed}
\newcommand{\chk}{\mathrel{\mathcolor{\chkcolor}{\Leftarrow}}}
\newcommand{\uncoloredsyn}{{\Rightarrow}}
\newcommand{\syn}{\mathrel{\mathcolor{\syncolor}{\uncoloredsyn}}}
\newcommand{\coverssym}{\judgword{\;covers\;}}
\newcommand{\spinejudgsym}{\mathrel{\mathcolor{dDigPurple}{\gg}}}
\newcommand{\chkjudg}[5]{\ensuremath{{#2} \entails {#3} \chk {#4}\;{#1} \ctxout{#5}}}
\newcommand{\synjudg}[5]{\ensuremath{{#2} \entails {#3} \syn {#4}\;{#1} \ctxout{#5}}}
\newcommand{\spinejudg}[7]{%
    \ensuremath{{#1} \entails {#2} : {#3}\;{#4} \spinejudgsym {#5}\;{#6} \ctxout{#7}}%
}
\newcommand{\symrecspine}[1]{\lceil{#1}\rceil}
\newcommand{\recspinejudg}[7]{%
    \ensuremath{{#1} \entails {#2} : {#3}\;{#4} \spinejudgsym {#5}\;\symrecspine{#6} \ctxout{#7}}%
}
\newcommand{\matchjudg}[7]{\ensuremath{{#3} \entails {#4} :: {#5}\;{#1} \chk {#6} \;{#2} \ctxout{#7}}}
\newcommand{\matchelimjudg}[7]{\ensuremath{{#2} \ctxsep {#4} \entails {#3} :: {#5}\;! \chk {#6}\;{#1} \ctxout{#7}}}
\newcommand{\covers}[4]{\ensuremath{{#2} \entails {#3} \coverssym {#4}\;{#1}}}
\newcommand{\coverseq}[4]{\ensuremath{{#1} \;/\; {#2} \entails {#3} \coverssym {#4} \;\p}}
\newcommand{\declchkjudg}[4]{\ensuremath{{#2} \entails {#3} \chk {#4}\;{#1}}}
\newcommand{\declsynjudg}[4]{\ensuremath{{#2} \entails {#3} \syn {#4}\;{#1}}}
\newcommand{\declspinejudg}[6]{%
    \ensuremath{{#1} \entails {#2} : {#3}\;{#4} \spinejudgsym {#5}\;{#6}}%
}
\newcommand{\SPLdeclspinejudg}[6]{%
    \ensuremath{{#1} \entails \arrayenvl{{#2} \\ \!\!\!: {#3}\;{#4} \spinejudgsym {#5}\;{#6}}}%
}
\newcommand{\declrecspinejudg}[6]{%
    \ensuremath{{#1} \entails {#2} : {#3}\;{#4} \spinejudgsym {#5}\;\lceil{#6}\rceil}%
}
\newcommand{\declmatchjudg}[6][!]{\ensuremath{{#3} \entails {#4} :: {#5}\;{#1} \chk {#6}\;{#2}}}
\newcommand{\declmatchelimjudg}[6]{\ensuremath{{#2} \ctxsep {#4} \entails {#3} :: {#5}\;! \chk {#6}\;{#1}}}
\newcommand{\expandmatch}[3]{{#1} \,\stackrel{#2}{\leadsto}\, {#3}}
\newcommand{\expandpair}[2]{\expandmatch{#1}{\times}{#2}}
\newcommand{\expandsum}[3]{\expandmatch{#1}{+}{{#2} \parallel {#3}}}
\newcommand{\expandvec}[3]{\expandmatch{#1}{\vecop}{{#2} \parallel {#3}}}
\newcommand{\expandvar}[2]{\expandmatch{#1}{\mathrm{var}}{#2}}
\newcommand{\expandunit}[2]{\expandmatch{#1}{\unitty}{#2}}
\newcommand{\declcovers}[4][!]{{#2} \entails {#3} \coverssym {#4} \,{#1}}
\newcommand{\declcoverseq}[4]{{#1} \;/\; {#2} \entails {#3} \coverssym {#4} !}
\newcommand{\guarded}[1]{{#1}\;\mathsf{guarded}}
\newcommand{\p}{{\normalfont\textbf{!}}}
\newcommand{\ok}{}
\newcommand{\OK}{\hspace*{-0.6pt}{\not{~}\text{\hspace*{-3.5pt}\normalfont{!}}}\hspace{1.3pt}}
\newcommand{\moreprincipal}{\sqsubseteq}
\newcommand{\hyp}[3]{{#2}\,{:}\,{#3}\,{#1}}
\newcommand{\hypeq}[2]{{#1}\,{=}\,{#2}}
\newcommand{\BotCtx}[1]{{#1}^{\bot}}
\newcommand{\Deltabot}{\BotCtx{\Delta}}
\newcommand{\alltype}[1]{\All{#1}}
\newcommand{\vecop}{\mathsf{Vec}}
\newcommand{\vectype}[2]{\vecop\;{#1}\;{#2}}
\newcommand{\vecnil}{{}\texttt{[]}}
\newcommand{\vecconssym}{::}
\newcommand{\veccons}[2]{{#1} \vecconssym {#2}}
\newcommand{\extendssym}{\longrightarrow}
\newcommand{\extends}[2]{{#1} \extendssym {#2}}
\newcommand{\substextend}[2]{\extends{#1}{#2}}
\newcommand{\sepsym}{\hspace{-0.1ex}{*}}
\newcommand{\sep}{\mathrel{\sepsym}}
\newlength{\zzsepextendsymsz}
\newcommand{\sepextendsym}{\mathrel{\text{%
     \settowidth{\zzsepextendsymsz}{$\longrightarrow$}%
     \raisebox{0.3ex}{$\longrightarrow$}%
     \hspace{-0.55\zzsepextendsymsz}%
     \hspace{-0.25ex}%
     \raisebox{-0.45ex}{$\sepsym$}%
     \hspace{0.55\zzsepextendsymsz}%
     \hspace{-0.70ex}%
}}}
\newcommand{\sepextend}[4]{({#1} \sep {#2}) \mathrel{\sepextendsym} ({#3} \sep {#4})}
\newcommand{\judgetp}[2]{{#1} \entails {#2}~\judgword{type}}
\newcommand{\judgetpvec}[2]{{#1} \entails {#2}~\judgword{types}}
\newcommand{\judgeprop}[2]{{#1} \entails {#2}~\judgword{prop}}
\newcommand{\exbasis}[2]{\FEV{#2} \subseteq \dom{#1}}
\newcommand{\ahat}{\hat{\alpha}}
\newcommand{\bhat}{\hat{\beta}}
\newcommand{\chat}{\hat{\gamma}}
\newcommand{\rulename}[1]{\text{\normalfont\textsf{#1}}}
\newcommand{\declcoversrule}[1]{\rulename{DeclCovers{#1}}}
\newcommand{\Unifyrule}[1]{\rulename{Unify#1}}
\newcommand{\substextendrulename}[1]{\ensuremath{{\extendssym}{\rulename{#1}}}}
\newcommand{\Dsubrulename}[1]{\ensuremath{{\declsubtype}\rulename{#1}}}
\newcommand{\Dsubrule}[3]{\ensuremath{{\declsubtype}_{#1}^{{#2}\rulename{#3}}}}
\newcommand{\Addproprulename}[1]{\ensuremath{\rulename{Addprop{#1}}}}
\newcommand{\Propequivrulename}[1]{\ensuremath{{\equiv}\rulename{Prop{#1}}}}
\newcommand{\Equivrulename}[1]{\ensuremath{{\equiv}\rulename{#1}}}
\newcommand{\Subrulename}[1]{\ensuremath{{\subtype}\rulename{#1}}}
\newcommand{\Subrule}[3]{\ensuremath{{\subtype}_{\text{$#1$}}^{\text{$#2$}}\rulename{#3}}}
\newcommand{\SubSubst}[1]{\Subrulename{\flaming{Subst{#1}}}}
\newcommand{\Checkproprulename}[1]{\ensuremath{\rulename{Checkprop{#1}}}}
\newcommand{\Elimproprulename}[1]{\ensuremath{\rulename{Elimprop{#1}}}}
\newcommand{\Checkeqrulename}[1]{\ensuremath{\rulename{Checkeq{#1}}}}
\newcommand{\Elimeqrulename}[1]{\ensuremath{\rulename{Elimeq{#1}}}}
\newcommand{\ElimeqSubstx}[2]{\Elimeqrulename{Subst{#1}{#2}}}
\newcommand{\DeclSortrulename}[1]{\ensuremath{\rulename{{#1}Sort}}}
\newcommand{\DeclWFvecrulename}[1]{\ensuremath{\rulename{Decl{#1}WF}}}
\newcommand{\DeclWFrulename}[1]{\ensuremath{\rulename{Decl{#1}WF}}}
\newcommand{\Solverulename}[1]{\ensuremath{\rulename{PropSolve{#1}}}}
\newcommand{\WFrulename}[1]{\ensuremath{\rulename{{#1}WF}}}
\newcommand{\Proprulename}[1]{\ensuremath{\rulename{{#1}Prop}}}
\newcommand{\DeclProprulename}[1]{\ensuremath{\rulename{{#1}DeclProp}}}
\newcommand{\Sortrulename}[1]{\ensuremath{\rulename{{#1}Sort}}}
\newcommand{\DeclCWFrulename}[1]{\ensuremath{\rulename{{#1}DeclCtx}}}
\newcommand{\CWFrulename}[1]{\ensuremath{\rulename{{#1}Ctx}}}
\newcommand{\Instrulename}[1]{\ensuremath{\rulename{Inst{#1}}}}
\newcommand{\Decltyrulename}[1]{\rulename{Decl{#1}}}
\newcommand{\DeclIntrorulename}[1]{\Decltyrulename{\ensuremath{#1}I}}
\newcommand{\DeclIntroSynrulename}[1]{\Decltyrulename{\ensuremath{#1}I$\syn$}}
\newcommand{\DeclElimrulenameextra}[2]{\Decltyrulename{\ensuremath{#1}E{#2}}}
\newcommand{\DeclElimrulename}[1]{\DeclElimrulenameextra{#1}{}}
\newcommand{\DeclApprulename}[1]{\Decltyrulename{\ensuremath{#1}App}}
\newcommand{\DeclSpinerulename}[1]{\Decltyrulename{\ensuremath{#1}Spine}}
\newcommand{\DeclMatchrulename}[1]{\Decltyrulename{Match\ensuremath{#1}}}
\newcommand{\DeclRecspinerulename}[1]{\Decltyrulename{Spine{#1}}}
\newcommand{\coversrule}[1]{\rulename{Covers#1}}
\newcommand{\Tyrulename}[1]{\ensuremath{\rulename{#1}}}
\newcommand{\Introrulename}[1]{\Tyrulename{\ensuremath{#1}I}}
\newcommand{\IntroSynrulename}[1]{\Tyrulename{\ensuremath{#1}I$\syn$}}
\newcommand{\IntroSolverulename}[1]{\Tyrulename{\ensuremath{#1}I$\ahat$}}
\newcommand{\Elimrulenameextra}[2]{\Tyrulename{\ensuremath{#1}E{#2}}}
\newcommand{\Elimrulename}[1]{\Elimrulenameextra{#1}{}}
\newcommand{\Apprulename}[1]{\Tyrulename{\ensuremath{#1}App}}
\newcommand{\Spinerulename}[1]{\Tyrulename{\ensuremath{#1}Spine}}
\newcommand{\Recspinerulename}[1]{\Tyrulename{Spine{#1}}}
\newcommand{\Matchrulename}[1]{\Tyrulename{Match\ensuremath{#1}}}
\newcommand{\AssignRuleName}[1]{\ensuremath{\rulename{A#1}}}
\newcommand{\AssignIntroName}[1]{\AssignRuleName{\ensuremath{#1}I}}
\newcommand{\AssignElimName}[1]{\AssignRuleName{\ensuremath{#1}E}}
\newcommand{\judge}[3]{{#1} \entails {#2} : {#3}}  %
\newcommand{\LOCALCOPY}[1]{%
          \href{papers/#1}{\bf \textcolor{dGreen}{local copy}}}
\newcommand{\ctxsep}{\ifnum\CompactJudgments=1%
\ensuremath{\,/\,}%
\else%
\ensuremath{\;/\;}%
\fi}
\newcommand{\csu}[2]{\rulename{csu}({#1}, {#2})}  %
\newcommand{\mgu}[2]{\rulename{mgu}({#1}, {#2})}  %
\newcommand{\matchor}{\ensuremath{\normalfont\,\texttt{|}\hspace{-4.85pt}\texttt{|}\,}}
\newcommand{\alt}{\matchor}
\newcommand{\xrec}{\keyword{rec}}
\newcommand{\rec}[1]{\xrec\;#1.\,}
  \newenvironment{myfigure*}[1][]{\begin{figure*}[#1]\bgroup\small}{\egroup\end{figure*}}
  \newenvironment{myfigure*}[1][]{\begin{figure*}[#1]\bgroup}{\egroup\end{figure*}}
  \newenvironment{myfigure}[1][]{\begin{figure}[#1]\bgroup\small}{\egroup\end{figure}}
\newcommand{\repeatcaption}[3]{%
  \addtocounter{figure}{-1}%
  \renewcommand{\thefigure}{\ref{#1}{#2}}%
  \captionsetup{list=no}%
  \caption{#3}%
}
\newcommand{\polvar}{\mathcal{P}}
\newtheoremstyle{better}%
   {\topsep}%
   {\topsep}%
   {\upshape\slshape}%
   {}%
   {\bfseries}%
   {.}%
   {0.7em}%
   {}%
\theoremstyle{better}
\newtheorem*{lemma*}{Lemma}
\newtheorem*{definition*}{Definition}
\renewcommand{\FLabel}[1]{\label{#1}}
\keywords{bidirectional typechecking, higher-rank polymorphism, indexed types, GADTs, equality types, existential types}
\begin{document}
\title[Sound and Complete Bidirectional Typechecking]
{%
  Sound and Complete Bidirectional Typechecking
  for Higher-Rank Polymorphism
  with Existentials and Indexed~Types}

\ifnum\OPTIONConf=1     %
  \ifnum\OPTIONArxiv=0%
  \else
   \fi
    \iffalse
    \else
    \author[Jana Dunfield]{Jana Dunfield}
        \orcid{0000-0002-3718-3395}
        \affiliation{%
          \institution{Queen's University}
          \streetaddress{Goodwin Hall 557}
          \city{Kingston, ON}
          \postcode{K7L 3N6}
          \country{Canada}
        }
        \email{jd169@queensu.ca}

    \author[Neel\ Krishnaswami]{Neelakantan R. Krishnaswami}
        \affiliation{%
          \institution{University of Cambridge}
          \streetaddress{Computer Laboratory, William Gates Building}
          \city{Cambridge}
          \postcode{CB3 0FD}
          \country{United Kingdom}}
        \email{nk480@cl.cam.ac.uk}
    \fi%
\fi

\setlength{\pdfpageheight}{\paperheight}
\setlength{\pdfpagewidth}{\paperwidth}

\newcommand{\XXXACM}[1]{\vspace{#1}}  %

\begin{abstract}
  Bidirectional typechecking, in which terms either synthesize a type
  or are checked against a known type, has become popular for its
  applicability to a variety of type systems, its error reporting, and
  its ease of implementation.  Following principles from proof theory,
  bidirectional typing can be applied to many type constructs.
  The principles underlying a bidirectional approach to indexed types
  (\emph{generalized algebraic datatypes}) are
  less clear.  Building on proof-theoretic treatments of equality,
  we give a declarative specification of typing based on
  \emph{focalization}.  This approach
  permits declarative rules for coverage of pattern matching,
  as well as support for first-class existential types using a focalized
  subtyping judgment.
  We use refinement types to avoid
  explicitly passing equality proofs in our term syntax, making our
  calculus similar to languages such as
  Haskell and OCaml.
We also extend the declarative specification with
an explicit rules for deducing when a type is principal,
permitting us to give a complete declarative specification for a
rich type system with significant type inference.
  We also give a set of algorithmic typing rules, and prove that it is sound
  and complete with respect to the declarative system.
  The proof requires a number of technical innovations, including
  proving soundness and completeness in a mutually recursive fashion.
\end{abstract}

\maketitle

\setcounter{footnote}{0}

\section{Introduction}
\Label{sec:intro}

Consider a list type $\vecop$ with a numeric index representing
its length, in Agda-like notation:
\begin{mathpar}
 \begin{array}{l}
   \keyword{data} \; \textsf{Vec} : \textsf{Nat} \arr \textsf{Type} \arr \textsf{Type}~\keyword{where} \\
   ~~~~~~
   \begin{array}{l@{~~}l@{~~}l}
     \texttt{[]} &:& A \arr \vectype{0}{A}
     \\
     (::) &:& A \arr \vectype{n}{A} \arr \vectype{(\xsucc\;n)}{A}
   \end{array}
 \end{array}
\end{mathpar}
We can use this definition to write a head function that
always gives us an element of type $A$ when the length is at least
one:
\vspace*{-0.5ex}
\begin{mathpar}
 \begin{array}{l}
   \texttt{head} : \forall n, A.\; \vectype{(\xsucc\;n)}{A} \to A \\
   \texttt{head} \;(x :: xs)  =  x
 \end{array}
\end{mathpar}
This clausal definition omits the clause for $\vecnil$, which has an
index of $0$.  The type annotation tells us that
$\texttt{head}$'s argument has an index of $\succ{n}$ for some
$n$.  Since there is no natural number $n$ such that $0 =
\succ{n}$, the nil case cannot occur and can be omitted.

This is an entirely reasonable explanation for programmers,
but language designers and implementors will have more questions.
First, designers of functional languages are accustomed to the
benefits of the Curry--Howard correspondence, and expect a
\emph{logical} reading of type systems to accompany the operational
reading. So what is the logical reading of GADTs?  Second, how can we
implement such a type system?  Clearly we needed some equality
reasoning to justify leaving off the nil case, which is not trivial in
general. %

Since we relied on equality information to omit the nil case,
it seems reasonable to look to logical accounts of
equality. In proof theory, it is
possible to formulate equality in (at least) two different ways. The
better-known is the \emph{identity type} of Martin-L\"{o}f, but GADTs
actually correspond best to the equality of
\citet{schroeder-heister-equality} and \citet{girard-equality}.
The Girard--Schroeder-Heister (GSH) approach introduces equality via
the reflexivity principle:
\begin{mathpar}
    \inferrule*[]
          { }
          {\Gamma \entails t = t}
\end{mathpar}
The GSH elimination rule was originally formulated in a sequent
calculus style, as follows:
\begin{mathpar}
\inferrule*[]
          {\text{for all }\theta.\; \text{if } \theta \in \csu{s}{t} \text{ then }
           \theta(\Gamma) \entails \theta(C)}
          {\Gamma, (s=t) \entails C}
\end{mathpar}
Here, we write $\csu{s}{t}$ for a \emph{complete
set of unifiers} of $s$ and $t$. So the rule says that we can
eliminate an equality $s = t$ if, for every substitution $\theta$
that makes $s$ and $t$ equal, we can prove
the goal $C$.

This rule has three important features, two good and one bad.

\begin{enumerate}

\item The GSH elimination rule is an invertible left rule. By ``left
  rule'', we mean that the rule decomposes the assumptions to the left
  of the turnstile (in this case, the assumption that $s = t$), and by
  ``invertible'', we mean the conclusion of the rule implies the
  premises.\footnote{The invertibility of equality elimination is
    certainly not obvious; see
    \citet{schroeder-heister-equality} and \citet{girard-equality}. %
}

Invertible left rules are interesting, because they are known to
correspond (via Curry--Howard) to the deconstruction steps carried
out by pattern-matching rules~\citep{Krishnaswami09:pattern-matching}.
This is our first hint that the GSH rule has something to do with GADTs;
programming languages like Haskell and OCaml indeed use pattern
matching to propagate equality information.

\item Observe that if we have an inconsistent equation, we can
  immediately prove the goal. If we specialize the rule above to the
  equality $0 = 1$, we get:
\begin{mathpar}
\inferrule*[]
          { }
          {\Gamma, (0=1) \entails C}
\end{mathpar}
Because $0$ and $1$ have no unifiers, the complete set of unifiers is
the empty set. As a result, the GSH rule for this instance has no
premises, and the elimination rule for an absurd equation ends up
looking exactly like the elimination rule for the empty type:
\begin{mathpar}
\inferrule*[]
          { }
          {\Gamma, \bot \entails C}
\end{mathpar}
Moreover, recall that when we eliminate an empty type, we can view the
eliminator $\mathsf{abort}(e)$ as a pattern match with no clauses.
Together, these two features line up nicely with our definition of
\texttt{head}, where the impossibility of the case for $\vecnil$ was
indicated by the \emph{absence} of a pattern clause.
The use of equality in GADTs corresponds perfectly with the
GSH equality.

\item Alas, we cannot simply give a proof term assignment for first-order
logic and call it a day. The third important feature of the GSH
equality rule is its use of \emph{unification}: it works by treating the free
variables of the two terms as unification variables.  But type
inference algorithms also use unification, introducing unification
variables to stand for unknown types.

These two uses of unification are \emph{entirely different!} Type
inference introduces unification variables to stand for the specific
instantiations of universal quantifiers. In contrast, the
Girard--Schroeder-Heister rule uses unification to constrain the
universal parameters.
As a result, we need to understand how to integrate these two uses of
unification, or at least how to keep them decently separated, in order to
take this logical specification and implement type inference for it.
\end{enumerate}

This problem---formulating indexed types in as logical a style as
feasible, while retaining the ability to implement type inference
algorithms for them---is the subject of this paper.

\mypara{Contributions}
It has long been known that
GADTs are equivalent to the combination of existential types and
equality constraints \citep{Xi03:guarded}.
Our key contribution is to reduce GADTs to standard logical
ingredients, \emph{while retaining the implementability of the type
  system}.  We manage this by formulating a system of indexed
types in a bidirectional style (combining type \emph{synthesis} with
\emph{checking} against a known type), which is both
practically implementable and theoretically tidy. 

\begin{itemize}
\item Our language supports implicit higher-rank polymorphism (in
  which quantifiers can be nested under arrows) including existential
  types. While algorithms for higher-rank universal polymorphism are
  well-known \citep{PeytonJones07,Dunfield13}, our approach to
  supporting existential types is novel.

  We go beyond the standard practice of tying existentials to datatype
  declarations~\citep{Laufer94:existential-type-inference}, in favour
  of a first-class treatment of implicit existential types.  This
  approach has historically been thought difficult, since treating
  existentials in a first-class way opens the door to higher-rank
  polymorphism that mixes universal and existential quantifiers.

  Our approach extends existing bidirectional methods for handling
  higher-rank polymorphism, by adapting the
  proof-theoretic technique of \emph{focusing} to give a novel
  \emph{polarized subtyping judgment}, which lets us treat mixed
  quantifiers in a way that retains decidability while maintaining
  the essential properties of subtyping, such as stability under substitution
  and transitivity.  %

\item  Our language includes equality types in the style of Girard
  and Schroeder-Heister, but without an
  explicit introduction form  %
  for equality. 
  Instead, we treat equalities as property types,
  in the style of intersection or refinement types: we
  do not write explicit equality proofs in our syntax, permitting
  us to more closely model how equalities are used in
  OCaml and Haskell. 
  
\item The use of focusing also lets us equip our calculus with nested
  pattern matching. This fits in neatly with our bidirectional
  framework, and permits us to give a formal specification of coverage
  checking with GADTs, which is easy to understand, easy to implement,
  and theoretically well-motivated. 

\item In contrast to systems which globally possess or lack principal
  types, our declarative system tracks whether or not a derivation has
  a principal type.

  Our system includes an unusual ``higher-order principality'' rule,
  which says that if only a single type can be synthesized for a term,
  then that type is principal.  While this style of hypothetical
  reasoning is natural to explain to programmers, formalizing it
  requires giving an inference rule with universal quantification over
  possible typing derivations in the premise. This is an extremely
  non-algorithmic rule (even its well-foundedness is
  not immediate).

  As a result, the soundness and completeness proofs for our
  implementation have to be done in a new style. It is no longer
  possible to prove soundness and completeness independently, and
  instead we must prove them mutually recursively.
  
\item We formulate an algorithmic type system
  (\Sectionref{sec:alg-typing}) for our declarative calculus, and
  prove that typechecking is decidable, deterministic
  (\ref{sec:alg-determinacy}), and sound and complete (Sections
  \ref{sec:soundness}--\ref{sec:completeness}) with respect to the
  declarative system.

  The resulting type system is relatively easy to implement (an
  undergraduate implemented most of it on his own from a draft of the
  paper, with minimal contact with the authors), and is close in style
  to languages such as Haskell or OCaml. As a result, it seems like a
  reasonable basis for implementing new languages with expressive type
  systems.
\end{itemize}

Our algorithmic system (and, to a lesser extent, our declarative system)
uses some techniques developed by \citet{Dunfield13}, but we extend these
to a far richer type language (existentials, indexed types, 
sums, products, equations over type variables), and we differ by supporting
pattern matching, polarized subtyping, and principality tracking.

\mypara{Supplementary material}
The appendix
contains rules omitted for space reasons, and full proofs.

\section{Overview}
\Label{sec:overview}

To orient the reader, we give an overview and rationale of the
novelties in our type system, before getting into the details of the
typing rules and algorithm. 
As is well-known~\citep{Cheney03:FirstClassPhantom,Xi03:guarded},
GADTs can be desugared into type expressions that use equality and
existential types to express the return type constraints.  These two
features lead to difficulties in typechecking for GADTs.

\mypara{Universal, existentials, and type inference} Practical
typed functional languages must support some degree of type inference,
most critically the inference of type arguments. That is, if we have a
function $f$ of type $\alltype{a}{a \to a}$, and we want to apply it
to the argument $3$, then we want to write $f\;3$, and not
$f\;[\mathsf{Nat}]\;3$ (as we would in pure System F). Even
with a single type argument, the latter style is noisy, and
programs using even moderate amounts of polymorphism rapidly
become unreadable.
However, omitting type arguments has significant metatheoretical
implications.  In particular, it forces us to include subtyping in our
typing rules, so that (for instance) the polymorphic type
$\alltype{a} a \arr a$ is a subtype of its instantiations (like
$\mathsf{Nat} \arr \mathsf{Nat}$).

The subtype relation induced by System F-style polymorphism and
function contravariance is already undecidable \citep{Tiuryn96,Chrzaszcz98},
so even at the first step we must introduce restrictions on type inference to
ensure decidability. In our case, matters are further complicated by
the fact that we need to support \emph{existential types} in addition
to universal types.

Existentials are required to encode GADTs \citep{Xi99popl}, but
programming languages have traditionally stringently restricted the
use of existential types. Following the approach of
\citet{Laufer94:existential-type-inference}, languages such as \Ocaml
and Haskell tie existential introduction and elimination to datatype
declarations, so that there is always a syntactic marker for when to
introduce or eliminate existential types. This choice permits leaving
existentials out of subtyping altogether, at the price of no longer
permitting implicit subtyping (such as using $\fun{x}{x+1}$ at type
$\extype{a} a \arr a$).

While this is a practical solution, it increases the distance between
a surface language and its type-theoretic core. Our goal is to give a
\emph{direct} type-theoretic account of as many features of our
surface languages as possible. In addition to the theoretical
tidiness, this also has practical language design benefits. By
avoiding a complex elaboration step, we are forced to specify the type
inference algorithm in terms of a language close to a
programmer-visible surface language. This does increase the complexity
of the approach, but in a productive way: we are forced to analyze and
understand how type inference will look to the end programmer.

The key problem is that when both universal and existential
quantifiers are permitted, the order in which to instantiate
quantifiers when computing subtype entailments becomes unclear. For
example, suppose we need to decide
$\declsubjudg[]{\Gamma}{\alltype{a_1}{\extype{a_2}{A(a_1,
      a_2)}}}{\extype{b_1}{\alltype{b_2}{B(b_1,b_2)}}}$.  An algorithm
to solve this must either first introduce a unification variable for
$a_1$ and a parameter for $a_2$ first, and only then introduce a
unification variable for $b_1$ and a parameter for $b_2$, or the other
way around---and the order in which we make these choices matters!
With the first order, the instantiation for $b_1$ may refer to $a_2$,
but the instantiation for $a_1$ cannot have $b_2$ as a free variable.
With the second order, the instantiation for $a_1$ may have $b_2$ as a
free variable, but $b_1$ may not refer to $a_2$. 

In some cases, depending on what $A(a_1, a_2)$ and $B(b_1, b_2)$ are,
only one choice of order works. For example, if we are trying to decide
$\declsubjudg[]{\Gamma}{\alltype{a_1}{\extype{a_2}{a_1 \to
      a_2}}}{\extype{b_1}{\alltype{b_2}{b_2 \to b_1}}}$,
we must choose the first order:
we must pick an instantiation for $a_1$,
and then make $a_2$ into a parameter before we can instantiate $b_1$ as $a_2$.
The second order will not work, because $b_1$ must depend on $a_2$.
Conversely, if we are trying to solve
$\declsubjudg[]{\Gamma}{\alltype{a_1}{\extype{a_2}{a_1 \to
      a_2}}}{\extype{b_1}{\alltype{b_2}{\extype{b_3}{b_1 \times b_2 \to b_3}}}}$, the
first order will not work; we must instantiate $b_1$ (say, to $\mathsf{int}$) and quantify
over $b_2$ before instantiating $a_1$ as $\mathsf{int} \times b_2$.

As a result, the outermost connectives do not reliably determine
which side of a subtype judgement
$\declsubjudg[]{\Gamma}{\alltype{a}{A}}{\extype{b}{B}}$ to specialize
first. %

One implementation strategy is simply to give up determinism:
an algorithm could backtrack when faced with deciding subtype
entailments of this form. Unfortunately, backtracking is dangerous
for a practical implementation, since it potentially causes type-checking
to take exponential time.  %
This tends to defeat the benefit of a complete
declarative specification, since different implementations
with different backtracking strategies could have
radically different running times when checking the same program.
So we may end up with an implementation that is theoretically
complete, but incomplete in practice.

Instead, we turn to ideas from proof theory---specifically, polarized type theory.
In the language of polarization, universals are a \emph{negative} type,
and existentials
are a \emph{positive} type. So we introduce two
mutually recursive subtype relations:
$\declsubjudg[+]{\Gamma}{A}{B}$
for positive types
and
$\declsubjudg[-]{\Gamma}{A}{B}$
for negative types.
The positive subtype relation only deconstructs existentials, and the negative
subtype relation only deconstructs universals. This fixes the order in which
quantifiers are instantiated, making the problem decidable (in fact, rather
straightforward).

The price we pay is that fewer subtype entailments are derivable.
Fortunately, any program typeable under a more liberal subtyping
judgement can be made typable in our discipline by $\eta$-expanding
it.  Moreover, the lost subtype entailments seem to be rare in
practice: most of the types we see in practice are of the form
$\alltype{\vec{a}}{\vec{A} \to \extype{\vec{b}}{B}}$, and this fits
perfectly with the kinds of types our polarized subtyping
judgement works best on. As a result,  we
keep fundamental expressivity, and also efficient decidability.

\paragraph{Equality as a property.} The usual convention in Haskell
and OCaml is to make equality proofs in GADT definitions implicit. We
would like to model this feature directly, so that our calculus stays
close to surface languages, without sacrificing the logical reading of
the system.
In this case, the appropriate logical concepts come from the theory of
intersection types.  A typing judgment such as $e : A \times
B$ can be viewed as giving instructions on how to construct a value
(pair an $A$ with a $B$).  But types can also be viewed as
\emph{properties}, where $e : X$ is read ``$e$ has property $X$''.

To model GADTs, we need both of these readings! For example, a term of
vector type is constructed from nil and cons constructors, but also
has a property governing its index. To support this combination, we
introduce a type constructor $A \with P$, read ``$A$ with $P$'', to
model elements of type $A$ satisfying the property (equation) $P$. (We
also introduce $P \implies A$, read ``$P$ implies $A$'', for its
adjoint dual, consisting of terms which have the type $A$
conditionally under the assumption that $P$ holds.) Then we make
equality $t = t'$ into a property, and make use of standard rules
for property types (which omit explicit proof terms) to type equality
constraints~\citep[Section 2.4]{DunfieldThesis}. %

This gives us a logical account of how OCaml and Haskell avoid
requiring explicit equality proofs in their syntax. The benefit of
handling equality constraints through intersection types is that
certain restrictions on typing that are useful for decidability, such
as restricting property introduction to values, arise naturally from
the semantic point of view---via the value restriction needed for
soundly modeling intersection and union
types~\citep{Davies00icfpIntersectionEffects,Dunfield03:IntersectionsUnionsCBV}.
In addition, the appropriate approach to take when combining GADTs and
effects is clear.\footnote{The traditional $\mathtt{eq}$ GADT and its
  constructor $\mathtt{refl}$ can be encoded into our system as the
  type $1 \land (s=t)$, which which can be constructed as a unit value
  only under the constraint that $s$ equals $t$.}

\paragraph{Bidirectionality, pattern matching, and principality.}
Something that is not by itself novel in our approach is our
decision to formulate both the declarative and algorithmic systems in
a bidirectional style. Bidirectional checking \citep{Pierce00} is a popular
implementation choice for systems ranging from dependent types
\citep{Coquand96:typechecking-dependent-types,Abel08:MPC}
and contextual types \citep{Pientka08:POPL} to object-oriented
languages~\citep{Odersky01:ColoredLocal},
but also has good proof-theoretic foundations~\citep{Watkins04},
making it useful both for specifying and implementing type
systems. Bidirectional approaches make it clear to programmers where
annotations are needed (which is good for specification), and
can also remove unneeded nondeterminism from typing (which is
good for both implementation and proving its correctness).

However, it is worth highlighting that because both
bidirectionality and pattern matching arise from focalization,
these two features fit together extremely well. In fact, by
following the blueprint of focalization-based pattern matching, we
can give a coverage-checking algorithm
that explains when it is permissible to omit clauses in GADT pattern
matching.

In the propositional case, the type synthesis judgment of a
bidirectional type system generates principal types: if a type can be
inferred for a term, that type is the most specific possible type for
that term. (Indeed, in many cases, including the current system, the
inferred type will even be unique.) This property is lost once
quantifiers are introduced into the system, which is why it is not
much remarked upon.  However, prior work on GADTs, starting with
\citet{Simonet07:constraint}, has emphasized the importance of the
fact that handling equality constraints is much easier when the type
of a scrutinee is principal. Essentially, this ensures that no
existential variables can appear in equations, which prevents equation
solving from interfering with unification-based type inference.  The
OutsideIn algorithm takes this consequence as a definition, permitting
non-principal types just so long as they do not change the values of
equations. However, \citet{Vytiniotis11} note that while their system
is sound, they no longer have a completeness result for their type
system.

We use this insight to extend our bidirectional typechecking algorithm
to track principality:  The judgments we give track whether
types are principal, and we use this to give a
relatively simple specification for whether or not type annotations
are needed.  We are able to give a very natural spec to programmers---cases
on GADTs must scrutinize terms with
principal types, and an inferred type is principal just when it is the
only type that can be inferred for that term---which soundly and
completely corresponds to the implementation-side constraints: a type
is principal when it contains no existential unification variables.

\section{Examples}
\Label{sec:examples}

In this section, we give some examples of terms from our language,
which illustrate the key features of our system and give a sense
of how many type annotations are needed in practice. To help make this
point clearly, all of the examples which follow are unsugared: they
are the \emph{actual} terms from our core calculus.

\paragraph{Mapping over lists.}
First, we begin with the traditional \emph{map} function, which takes
a function and applies it to every element of a list.
\begin{center}
\parbox{0pt}{\begin{tabbing}
 $\rec{\textit{map}} \lam{f}\lam{xs}$\=
  $\caseop\big(xs,$ \= $\;\;\, \branch{ \vecnil }{ \vecnil }$ \+ \+\\
$\alt \branch{ \veccons{y}{ys} }{ \veccons{(f\;y)}{\textit{map}\;f\;ys }}\big)$ \-\-\\
$ : \alltype{n:\ind} 
    \alltype{\alpha:\type}
    \alltype{\beta:\type}
      (\alpha \to \beta) \to \vectype{n}{\alpha} \to \vectype{n}{\beta}$    
\end{tabbing}}
\end{center}
This
function
case-analyzes its
second argument $xs$. Given an empty $xs$, it returns the empty list;
given a cons cell $\veccons{y}{ys}$,
it applies the argument function $f$ to the head $y$
and makes a recursive call on the tail $ys$.

In addition, we annotate the definition with a type. We have two type
parameters $\alpha$ and $\beta$ for the input and output element types.
Since we are working with length-indexed lists, we also have
a length index parameter $n$, which lets us show by typing
that the input and output of $\textit{map}$ have the same length.

In our system, this type annotation is mandatory.  Full type inference
for definitions using GADTs requires polymorphic recursion, which is
undecidable. As a result, this example also requires annotation in
OCaml and GHC Haskell. However, Haskell and OCaml infer polymorphic
types when no polymorphic recursion is needed.
We adopt the simpler rule that \emph{all} polymorphic definitions are
annotated. This choice is motivated by \citet{Vytiniotis10}, who
analyzed a large corpus of Haskell code and showed that implicit
let-generalization was used primarily only for top-level definitions,
and even then it is typically considered good practice to annotate
top-level definitions for documentation purposes. Furthermore,
experience with languages such as Agda and Idris (which do not
implicitly generalize) show this is a modest burden in practice.

\paragraph{Nested patterns and GADTs.}
Now, we consider the \textit{zip}
function, which converts a pair of lists into a list of
pairs. In ordinary ML or Haskell, we must consider what to do when the
two lists are not the same length. However, with length-indexed lists, 
we can statically reject passing two lists of differing length:
\begin{center}
\parbox{0pt}{\begin{tabbing}
 $\rec{\textit{zip}} \lam{p}
\caseop\big(p,$\= $\;\;\; \branch{ (\vecnil,\vecnil) }{ \vecnil }$  \+\\
$\alt \branch{ (\veccons{x}{xs}, \veccons{y}{ys}) }{ \veccons{(x,y)}{\textit{zip}\;(xs, ys) }}\big)$ \-\\
$ : \alltype{n:\ind} 
    \alltype{\alpha:\type}
    \alltype{\beta:\type}
      (\vectype{n}{\alpha} \times \vectype{n}{\beta}) \to \vectype{n}{(\alpha \times \beta)}$    
\end{tabbing}}
\end{center}
This case expression has only two patterns, one for when both
lists are empty and one for when both lists have elements, with the
type annotation indicating that both lists must be of length $n$. Typing
shows that the cases where one list is empty and the other is non-empty are impossible,
so our coverage checking rules accept this as
a complete set of patterns.  This example also illustrates
that we support nested pattern matching.

\paragraph{Existential Types}
Now, we consider the \textit{filter} function, which takes a predicate
and a list, and returns a list containing the elements satisfying that
predicate. This example makes a nice showcase for supporting existential
types, since the size of the return value is not predictable statically.
\begin{center}
\parbox{0pt}{\begin{tabbing}
 $\rec{\textit{filter}} \lam{p} \lam{xs}$
  $\caseop\big(xs,$\= $\;\;\; \branch{ \vecnil }{ \vecnil }$  \+\\
 $\alt \branch{ \veccons{x}{xs} }{}\mathsf{let}$\=$\; tl = \textit{filter}\;p\;xs \;\mathsf{in}$ \+ \\
                                       $\caseop\big(p\;x,$\= \+
                                                    $\;\;\;\branch{\inj{1}{\wild}}{ tl }$ \\
                                                    $\alt \branch{\inj{2}{\wild}}{ \veccons{x}{tl} }\big)\big)$\-\-\- \\
  
$ : \alltype{n:\ind} 
    \alltype{\alpha:\type}
      (\alpha \to 1+1) \to \vectype{n}{\alpha} \to \extype{k:\ind} \vectype{k}{\alpha} $ 
\end{tabbing}}
\end{center}
So, this function takes predicate and a vector of arbitrary size, and
then returns a list of unknown size (represented by the existential
type $\extype{k:\ind}{\vectype{k}{\alpha}}$). Note that we did not
need to package the existential in another datatype, as one
would have to in OCaml or GHC Haskell---we are free to use
existential types as ``just another type constructor''.

\begin{figure}[t]

    \begin{bnfarray}
        \mbox{Expressions} & e & \bnfas &
                      x
              \bnfalt \unitexp
              \bnfalt \lam{x} e
              \bnfalt e\;s^+ %
              \bnfalt \rec{x} v
              \bnfalt (e : A)
    \\ &&&\!\!\!
              \bnfalt \pair{e_1}{e_2}
              \bnfalt \inj{1}{e}
              \bnfalt \inj{2}{e}
              \bnfalt \case{e}{\Pi}
     \\ &&&\!\!\!
              \bnfalt \vecnil
              \bnfalt \veccons{e_1}{e_2}
    \\[2pt]
        \mbox{Values} & v & \bnfas &
                      x
              \bnfalt \unitexp
              \bnfalt \lam{x} e
              \bnfalt \rec{x} v
              \bnfalt (v : A)
    \\ &&&\!\!\!
              \bnfalt \pair{v_1}{v_2}
              \bnfalt \inj{1} v
              \bnfalt \inj{2} v
              \bnfalt \vecnil
              \bnfalt \veccons{v_1}{v_2}
    \\[2pt]
        \mbox{Spines} & s & \bnfas &
                     \emptyspine
             \bnfalt
                     \appspine{e}{s}
    \\[1pt]
        \mbox{Nonempty spines}\!\!\! & s^+ & \bnfas &
                     \appspine{e}{s}
    \\[2pt]
        \mbox{Patterns} & \pat & \bnfas &
                      x
              \bnfalt \pair{\pat_1}{\pat_2}
              \bnfalt \inj{1} \pat
              \bnfalt \inj{2} \pat
              \bnfalt \vecnil
              \bnfalt \veccons{\pat_1}{\pat_2}
    \\[2pt]
        \mbox{Branches} & \pi & \bnfas & \branch{\patvec}{e}
    \\[2pt]
        \mbox{Branch lists}\!\!\! & \Pi & \bnfas &
                      \cdot
             \bnfalt \big(\pi \alt \Pi\big)
    \vspace{0.6ex}
    \end{bnfarray}

  \caption{Source syntax}
  \FLabel{fig:source-syntax}
\end{figure}

\begin{figure}[t]

   \begin{bnfarray}
      \text{Universal variables}\hspace{-1ex} & \alpha, \beta, \gamma &
      \\[2pt]
      \text{Sorts} & \sort & \bnfas & 
           \type
           \bnfalt \ind 
    \\[1pt]
    \text{Types} & A, B, C\! & \bnfas &
          \unitty
          \bnfalt  A \arr B
          \bnfalt  A + B
          \bnfalt  A \times B
    \\ &&&\!\!\!
          \bnfalt  \alpha
          \bnfalt  \alltype{\alpha:\sort}{A}
          \bnfalt  \extype{\alpha:\sort}{A}
    \\ &&&\!\!\!
          \bnfalt  P \implies A
          \bnfalt  A \with P
          \bnfalt  \vectype{t}{A}
    \\[1pt]
    \text{Terms/monotypes}\hspace{-1ex} & t,\tau,\sigma & \bnfas &
           \zero
           \bnfalt  \succ{t}
           \bnfalt  \unitty
           \bnfalt  \alpha
    \\ &&&\!\!\!
          \bnfalt  \tau \arr \sigma
          \bnfalt  \tau + \sigma
          \bnfalt  \tau \times \sigma
    \\[1pt]
      \text{Propositions} & P,Q & \bnfas & t = t'
    \\[1pt]
    \text{Contexts} & \Psi
                    & \bnfas & \cdot
                     \bnfalt   \Psi, \hyp{}{\alpha}{\sort}
                     \bnfalt   \Psi, \hyp{p}{x}{A}
    \\[1pt]
      \text{Polarities} & \polvar & \bnfas & + \bnfalt - \bnfalt \circ 
    \\
    \text{Binary connectives}\hspace{-4ex} & \binc & \bnfas &
           {\arr}
           \bnfalt  {+}
           \bnfalt  {\times}
    \\
    \text{Principalities} & p, q & \bnfas &
          \p
          \bnfalt
          \underbrace{\OK}_{\text{\hspace{-4ex}sometimes omitted\hspace{-4ex}}}
    \end{bnfarray}

\caption{Syntax of declarative types and contexts}
\FLabel{fig:decl-syntax}
\end{figure}

\begin{figure}
\newcommand{\Room}[2]{
  \!\!\!\begin{tabular}[t]{l}
    #1
    \\{}%
    \ensuremath{#2}%
  \end{tabular}\!\!\!%
}
\small
    $~$\!\begin{tikzpicture}
    \gdef\CompactJudgments{1}
      [auto, node distance=2cm, >=latex,  %
       descr/.style= {fill=white, inner sep=4pt, anchor=center}
      ]

      \node (sub) {\Room{subtyping}{\declsubjudg[\polvar]{\Psi}{A}{B}}};

      \node [below of=sub, node distance=38pt] (chk) {\Room{type checking}{\declchkjudg{p}{\Psi}{e}{A}}};

      \node [below of=chk, left=-15pt, node distance=48pt] (syn) {\Room{type synthesis}{\declsynjudg{p}{\Psi}{e}{B}}};

      \node [left of=sub, node distance=74pt] (chkelim) {\Room{checking, eq.\ elim.}{\declchkjudg{p}{\Psi \ctxsep P}{e}{C}}};

      \draw [->] (chk) -- (sub);

      \node [left of=chk, node distance=85pt] (spine) {\small \Room{spine typing}{\declspinejudg{\Psi}{s}{A\!}{p\!}{\!B}{\!q}}\!\!};

      \node [below of=spine, node distance=45pt, left=-5pt] (recspine) {\small\Room{\!\!\!\!\begin{tabular}{l} principality-recovering\!\!\!\! \\ spine typing \end{tabular}}{\declrecspinejudg \Psi s A p B q}};

      \path [->, %
      line width=1pt, color=dRed] (syn) edge (recspine);
      \draw [->, line width=1pt, color=dRed] (recspine) -- (spine);
      \draw [<->, line width=1pt, color=dRed] (chk) -- (syn);
      \draw [<->, line width=1pt, color=dRed] (chk) -- (chkelim);
      \draw [->, line width=1pt, color=dRed] (spine) -- (chk);
      \node [below of=chk, right=35pt, node distance=46pt] (match) {\Room{pattern matching}{\declmatchjudg p \Psi \Pi \Avec C}};
      \node [right of=chk, node distance=94pt] (elimeq) {\Room{match, eq.\ elim.}{\declmatchelimjudg p \Psi \Pi P \Avec C}};

      \node [right of=sub, below=1pt, right=24pt, node distance=15pt] (covers) {\Room{coverage}{\covers{q}{\Psi}{\Pi}{\Avec}}};
      \draw [->] (chk) -- (covers);

      \draw [<->, line width=1pt, color=dRed] (match) -- (chk);
      \draw [<->, line width=1pt, color=dRed] (match) -- (elimeq);

    \end{tikzpicture}

\vspace*{-0.6ex}

\caption{Dependency structure of the declarative judgments}
\FLabel{fig:declgrue}
\end{figure}

\begin{figure}[t]
\raggedright
\begin{minipage}[t]{0.45\linewidth}
~\\[-11pt]

  \loudjudgbox{polarity}
               {\arrayenvcl{
                   \polarity{A}
                   \\
                   \nonPos{A}
                   \\
                   \nonNeg{A}
                 }
               } 
               {Determine the polarity of a type
                 \\[0.5ex] 
                Check if $A$ not positive
                \\[0.5ex] 
                Check if $A$ not negative
                \\[0.3ex]
              }
  \begin{mathpar}
    \begin{array}{lcl}
      \polarity{\alltype{\alpha:\sort}{A}} & = & - \\
      \polarity{\extype{\alpha:\sort}{A}} & = & + \\
      \polarity{A} & = & \circ \;\;\mbox{otherwise} \\[0.6ex]

      \nonPos{A} & \text{iff} & \polarity{A} \neq + \\
      \nonNeg{A} & \text{iff} & \polarity{A} \neq -
    \end{array}
  \end{mathpar}
\end{minipage}
~~
\begin{minipage}[t]{0.45\linewidth}
~\\[-11pt]

  \loudjudgbox{polarity}
               {\joinpolarity{\polvar_1}{\polvar_2}
               } 
               {
                Join polarities
              }
  \begin{mathpar}
    \begin{array}{lcl}
      \joinpolarity{+}{\polvar_2} & = & + \\ 
      \joinpolarity{-}{\polvar_2} & = & - \\ 
      \joinpolarity{\circ}{+} & = & + \\ 
      \joinpolarity{\circ}{-} & = & - \\ 
      \joinpolarity{\circ}{\circ} & = & -
    \end{array}
  \end{mathpar}
\end{minipage}

\medskip
  
  \loudjudgbox{declsubjudg}
                {\declsubjudg[\polvar]{\Psi}{A}{B}}
                {Under context $\Psi$, type $A$ is a subtype of $B$,
                 decomposing head connectives of polarity $\polvar$}
  \vspace*{-1.0ex}
  \begin{mathpar} 
    \Infer{\!\DsubReflPm}
          {
            \judgetp{\Psi}{A}
            \\
            \nonPos{A} \\ \nonNeg{A}
          }
          {\declsubjudg[\polvar]{\Psi}{A}{A}}
   \hspace*{-0.3ex}
   \arrayenvl{
       \Infer{\DsubPosNeg}
             {\declsubjudg[-]{\Psi}{A}{B} \\ \nonPos{A} \\ \nonPos{B}}
             {\declsubjudg[+]{\Psi}{A}{B}}
       \\[0.4ex]
       \Infer{\DsubNegPos}
             {\declsubjudg[+]{\Psi}{A}{B} \\ \nonNeg{A} \\ \nonNeg{B}}
             {\declsubjudg[-]{\Psi}{A}{B}}
   }
   \\
    \Infer{\!\DsubAllL}
              {\judge{\Psi}{\tau}{\sort}
                \\
                \declsubjudg[-]{\Psi}{[\tau/\alpha]A}{B}}
              {\declsubjudg[-]{\Psi}{\alltype{\alpha{:}\sort}{A}}{B}}
    ~~~~
    \Infer{\!\DsubAllR}
          {\declsubjudg[-]{\Psi, \beta : \sort}{A}{B}}
          {\declsubjudg[-]{\Psi}{A}{\alltype{\beta{:}\sort}{B}}}
    \\
    \Infer{\!\DsubExistsL}
              {\declsubjudg[+]{\Psi, \alpha:\sort}{A}{B}}
              {\declsubjudg[+]{\Psi}{\extype{\alpha{:}\sort}{A}}{B}}
    ~~~~
    \Infer{\!\DsubExistsR}
          {\judge{\Psi}{\tau}{\sort} \\
           \declsubjudg[+]{\Psi}{A}{[\tau/\beta]B}}
          {\declsubjudg[+]{\Psi}{A}{\extype{\beta{:}\sort}{B}}}
  \vspace{-0.5ex}
  \end{mathpar}
  
  \caption{Subtyping in the declarative system}
  \FLabel{fig:decl-subtyping}
\end{figure}

\section{Declarative Typing}
\Label{sec:decl-typing}

\mypara{Expressions}
Expressions (\Figureref{fig:source-syntax})
are variables $x$;
the unit value $\unitexp$;
functions $\lam{x} e$;
applications to a spine $e\,s^+$;
fixed points $\rec{x} v$;
annotations $(e : A)$;
pairs $\pair{e_1}{e_2}$;
injections into a sum type $\inj{k}{e}$;
case expressions $\case{e}{\Pi}$ where $\Pi$ is a list of branches
$\pi$, which can eliminate pairs and injections (see below);
the empty vector $\vecnil$; and consing a head $e_1$ to a tail
vector $e_2$.

Values $v$ are standard for a call-by-value semantics; the variables
introduced by fixed points are considered values, because we only
allow fixed points of values.
A spine $s$ is a list of expressions---arguments to a function.
Allowing empty spines (written $\emptyspine$) is convenient in the typing rules, but
would be strange in the source syntax, so (in the grammar of expressions $e$)
we require a nonempty spine $s^+$.  We usually omit the empty spine $\emptyspine$,
writing $e_1\,e_2$ instead of $e_1\,e_2\,\cdot$.  Since we use juxtaposition for both
application $e\;s^+$ and spines, some strings are ambiguous;
we resolve this ambiguity in favour of the spine, so $e_1\,e_2\,e_3$ is
parsed as the application of $e_1$ to the spine $e_2\,e_3$, which is technically
$\appspine{e_2}{(\appspine{e_3}{\emptyspine})}$.
Patterns $\pat$ consist of pattern variables, pairs, and injections.
A branch $\pi$ is a \emph{sequence} of patterns $\patvec$ with a
branch body $e$.  We represent patterns as sequences, which
enables us to deconstruct tuple patterns. %

\mypara{Types}
We write types as $A$, $B$ and $C$.  We have the unit type
$\unitty$, functions $A \arr B$, sums $A + B$, and products
$A \times B$.
We have universal and existential types
$\alltype{\alpha:\sort} A$ and
$\extype{\alpha:\sort} A$; these are predicative quantifiers
over monotypes (see below).  We write $\alpha$, $\beta$, etc.\ for
type variables; these are
universal, except when bound within an existential type.
We also have a \emph{guarded type} $P \implies A$, read
``$P$ implies $A$''.  This implication corresponds
to type $A$, provided $P$ holds.  Its dual is the \emph{asserting type}
$A \with P$, read ``$A$ with $P$'', which witnesses the proposition $P$.
In both, $P$ has no runtime content.

\mypara{Sorts, terms, monotypes, and propositions}
Terms and monotypes $t$, $\tau$, $\sigma$ share a grammar
but are distinguished by their \emph{sorts} $\sort$.
Natural numbers $\zero$ and $\succ{t}$ are \emph{terms} and have sort $\ind$.
Unit $\unitty$ has the sort $\type$ of \emph{monotypes}.
A variable $\alpha$ stands for a term or a monotype, depending on the sort $\sort$
annotating its binder.  Functions, sums, and products of monotypes are monotypes
and have sort $\type$.
We tend to prefer $t$ for terms and $\sigma$, $\tau$ for monotypes.

A proposition $P$ or $Q$ is simply an equation $t = t'$.
Note that terms, which represent runtime-irrelevant information, are
distinct from expressions; however, an expression may include type
annotations of the form $P \implies A$ and $A \with P$, where $P$
contains terms.

\mypara{Contexts}
A declarative context $\Psi$ is an \emph{ordered} sequence of universal variable declarations
$\alpha : \sort$ and expression variable typings $\hyp p x A$, where
$p$ denotes whether the type $A$ is principal (\Sectionref{sec:decl-typing-judgments}).
A variable $\alpha$ can be free in a type $A$ only if $\alpha$ was declared to
the left: $\alpha\,{:}\,\type,\, \hyp p x \alpha$ is well-formed,
but  $\hyp p x \alpha,\, \alpha\,{:}\,\type$ is not.

\subsection{Subtyping}

We give our two subtyping relations,
$\declsubtype^{+}$ and $\declsubtype^{-}$,
in Figure~\ref{fig:decl-subtyping}.
We treat the universal quantifier as a negative type (since it is a
function in System F), and the existential as a positive type (since
it is a pair in System F).  %
We have two typing rules 
for each of these connectives, corresponding to the left and right
rules for universals and existentials in the sequent calculus.
We treat all other types as having no polarity. The positive
and negative subtype judgments are mutually recursive, and the
\DsubPosNeg rule permits switching the polarity of subtyping from 
positive to negative  when both of the types are non-positive,
and conversely for \DsubNegPos. When both types are neither positive
nor negative, we require them to be equal (\DsubReflPm). 

In logical terms, functions and guarded types are negative;
sums, products and assertion types are positive.  We could
potentially operate on these types in the negative and positive subtype
relations, respectively.  Leaving out (for example) function
subtyping means that we will have to do some $\eta$-expansions
to get programs to typecheck; we omit these rules to keep the implementation
complexity low.  (The idea that $\eta$-expansion can substitute for subsumption
dates to \citet{Barendregt83}.)

This also illustrates a nice feature of bidirectional typing:
we are relatively free to adjust the subtype relation to taste. Moreover,
the structure of polarization makes it easy to work out just what the
rules should be. E.g., to add function subtyping
to our system, we would use the rule:
\begin{displaymath}
  \inferrule*[]
             { \declsubjudg[+]{\Psi}{A'}{A} \\
               \declsubjudg[-]{\Psi}{B}{B'} }
             { \declsubjudg[-]{\Psi}{A \to B}{A' \to B'} }
\end{displaymath}
As polarized function types are a negative type of the form
$X^+ \to Y^-$, we see (1) the rule as a whole lives in the
negative subtyping judgement, (2) argument types compare in the
positive judgement (with the usual contravariant twist), and (3) result
types compare in the negative judgement.

\begin{figure*}[htbp]
  \raggedright
  \loudjudgbox{chkintro}{\chkintro{e}}{\small \!Expression $e$ is a checked introduction form}
  \vspace*{-3.4ex}
  \[
      \Infer{}{}{\chkintro{\lam{x} e}}
      ~~~~~
      \Infer{}{}{\chkintro{\unitexp}}
      ~~~~
      \Infer{}{}{\chkintro{\pair{e_1}{e_2}}}
      ~~~~~
      \Infer{}{}{\chkintro{\inj{k} e}}
      ~~~~~
      \Infer{}{}{\chkintro{\vecnil}}
      ~~~~
      \Infer{}{}{\chkintro{\veccons{e_1}{e_2}}}
  \vspace*{-1.0ex}
  \]

  \caption{``Checking intro form''}
  \label{fig:chkintro}
\end{figure*}

\begin{figure}[htbp]
  \centering
  \raggedright
\ifnum\OPTIONInAppendix=1
\runonfontsz{9.0pt}
\fi
  \loudjudgbox{declcheckprop}{\declcheckprop{\Psi}{P}}
          {Under context $\Psi$, check $P$}
   ~\\[-4.0ex]
   \hspace*{0.4\textwidth}$\Infer{\DeclCheckpropEq}
         { }
         {\declcheckprop{\Psi}{(t = t)}}$

  \begin{tabular}[t]{@{}l@{}}
    \vspace*{1.0ex}
  \judgbox{
    \arrayenvcl{
    \declchkjudg{p}{\Psi}{e}{A}
    \\
    \declsynjudg{p}{\Psi}{e}{A}
  }
  }%
     {Under context $\Psi$, expression $e$ checks against input type $A$
       \\[2pt]
     Under context $\Psi$, expression $e$ synthesizes output type $A$}
  \end{tabular}
  \begin{mathpar}
     \Infer{\DeclVar}
          {\hyp{p}{x}{A} \in \Psi}
          {\declsynjudg{p}{\Psi}{x}{A}}
     \and
     \Infer{\DeclSub}
          {\declsynjudg{q}{\Psi}{e}{A}
            \\
            \declsubjudg[\joinpolarity{\polarity{B}}{\polarity{A}}]{\Psi}{A}{B}
          }
          {\declchkjudg{p}{\Psi}{e}{B}}
     \and
     \Infer{\DeclAnno}
          {\judgetp{\Psi}{A}
           \\
           \declchkjudg{\p}{\Psi}{e}{A}
          }
          {\declsynjudg{\p}{\Psi}{(e : A)}{A}}
     ~~~~
     \Infer{\!\DeclRec}
          {
            \declchkjudg{p}{\Psi, \hyp{p}{x}{A}}{v}{A}
          }
          { \declchkjudg{p}{\Psi}{\rec{x} v}{A} }
     ~~~~
     \Infer{\DeclUnitIntro}
          {}
          {\declchkjudg{p}{\Psi}{\unitexp}{\unitty}}
     \and
     \Infer{\DeclAllIntro}
           {
             \chkintro{v}
             \\
             \declchkjudg{p}{\Psi, \alpha:\sort}{v}{A}
           }
           {\declchkjudg{p}{\Psi}{v}{(\alltype{\alpha:\sort} A)}}
     \and
     \Infer{\DeclExIntro}
           {\judge{\Psi}{\tau}{\sort} \\
            \declchkjudg{\ok}{\Psi}{e}{[\tau/\alpha]A}}
           {
             \declchkjudg{p}{\Psi}{e}{(\extype{\alpha:\sort} A)}
           }
     \and
     \Infer{\DeclImpliesIntro}
           { \chkintro{v}
             \\
             \declchkjudg{\p}{\Psi \ctxsep P}{v}{A}
           }
           {\declchkjudg{\p}{\Psi}{v}{(P \implies A)}}
     \and
     \Infer{\DeclWithIntro}
           {
                \declcheckprop{\Psi}{P}
                \\
                \declchkjudg{p}{\Psi}{e}{A}
           }
           {\declchkjudg{p}{\Psi}{e}{(A \with P)}}
     \\
     \Infer{\!\DeclArrIntro}
          {\declchkjudg{p}{\Psi, \hyp{p}{x}{A}}{e}{B}
          }
          {\declchkjudg{p}{\Psi}{\lam{x} e}{A \arr B}}
     ~~
     \Infer{\!\DeclArrElim}
          {\declsynjudg{p}{\Psi}{e}{A}
           \\
           \declrecspinejudg{\Psi}{s}{A}{p}{C}{q}
          }
          {\declsynjudg{q}{\Psi}{e\;s}{C}}
     \ifnum\OPTIONConf=1%
       \vspace{-0.0ex}
     \fi
\vspace{-0.5ex}
\vspace{-0.5ex}
     \\
\ifnum\OPTIONInAppendix=1%
     \Infer{\DeclSumIntro{k}}
           {\declchkjudg{p}{\Psi}{e}{A_k}}
           {\declchkjudg{p}{\Psi}{\inj{k}{e}}{A_1 + A_2}}
     \and
     \Infer{\DeclPairIntro}
            {\declchkjudg{p}{\Psi}{e_1}{A_1}
              \\
              \declchkjudg{p}{\Psi}{e_2}{A_2}}
            {\declchkjudg{p}{\Psi}{\pair{e_1}{e_2}}{A_1 \times A_2}}
      \\
      \Infer{\DeclNil}
            {\declcheckprop{\Psi}{t = \zero}}
            {\declchkjudg{p}{\Psi}{\vecnil}{(\vectype{t}{A})}}
      ~~~
      \Infer{\DeclCons}
            {
              \declcheckprop{\Psi}{t = \succ{t_2}}
              \\
              \arrayenvbl{
              \declchkjudg{p}{\Psi}{e_1}{A}
              \\
              \declchkjudg{\OK}{\Psi}{e_2}{(\vectype{t_2}{A})}
            }
            }
            {
              \declchkjudg{p}{\Psi}{\veccons{e_1}{e_2}}{(\vectype{t}{A})}
            }
      \\
\fi
     \Infer{\DeclCase}
           {
                 \declsynjudg{q}{\Psi}{e}{A}
                  \\
                  \declmatchjudg{p}{\Psi}{\Pi}{A}{C}
                  \\
                  \forall B.\; \mbox{if } \declsynjudg{q}{\Psi}{e}{B} \mbox{ then } 
                  \declcovers[q]{\Psi}{\Pi}{B}
           }
           {\declchkjudg{p}{\Psi}{\case{e}{\Pi}}{C}}
  \ifnum\OPTIONInAppendix=0%
    \vspace{-1.0ex}
  \fi
     \end{mathpar}

  \judgbox{\arrayenvcl{
      \declspinejudg{\Psi}{s}{A}{p}{C}{q}
      \\
      \declrecspinejudg{\Psi}{s}{A}{p}{C}{q}
     }}%
     {Under context $\Psi$, \\
       passing spine $s$ to a function of type $A$ synthesizes type $C$; \\
       in the $\symrecspine{q}$ form, recover principality in $q$ if possible
    }
  \begin{mathpar}
     \Infer{\!\DeclAllSpine}
          {\judge{\Psi}{\tau}{\sort}
            ~~~~~~
            \declspinejudg{\Psi}{\appspine{e}{s}}{[\tau/\alpha]A}{\OK}{C}{q}
          }
          {\declspinejudg{\Psi}{\appspine{e}{s}}{(\alltype{\alpha:\sort} A)}{p}{C}{q}}
     ~~
     \Infer{\!\DeclImpliesSpine}
         {\declcheckprop{\Psi}{P}
          ~~~~~~
          \declspinejudg{\Psi}{\appspine{e}{s}}{A}{p}{C}{q}}
         {\declspinejudg{\Psi}{\appspine{e}{s}}{(P \implies A)}{p}{C}{q}}
     \\ 
     \Infer{\DeclEmptySpine}
         {}
         {\declspinejudg{\Psi}{\emptyspine}{A}{p}{A}{p}}
     ~~~~~
     \Infer{\DeclArrSpine}
         {%
             \declchkjudg{p}{\Psi}{e}{A}
             \\
             \declspinejudg{\Psi}{s}{B}{p}{C}{q}
         }
         {\declspinejudg{\Psi}{\appspine{e}{s}}{A \arr B}{p}{C}{q}}
     \\
     \Infer{\!\DeclRecover}
          {
            \declspinejudg{\Psi}{s}{A}{\p}{C}{\OK}
            ~~~
            \arrayenvbl{
              \text{for all $C'$.\;}
              \\
                  \text{~~if~~}\declspinejudg{\Psi}{s}{A}{\p}{C'}{\OK}
                  \\
                  ~\text{\;~then~}C' = C
             }
          }
          {\declrecspinejudg{\Psi}{s}{A}{\p}{C}{\p}}
     ~~~
     \Infer{\!\DeclPass}
          { \arrayenvbl{
              \declspinejudg{\Psi}{s}{A}{p}{C}{q}
            }
          }
          {\declrecspinejudg{\Psi}{s}{A}{p}{C}{q}}
  \end{mathpar}

  \judgbox{\declchkjudg{p}{\Psi \ctxsep P}{e}{C}}
       {Under context $\Psi$, incorporate proposition $P$
         and check $e$ against $C$}
\ifnum\OPTIONInAppendix=0
  \vspace{-2.7ex}
\else
  \vspace{-2.7ex}
\fi
  \begin{mathpar}
     \Infer{\DeclCheckBot}
           {\mgu{\sigma}{\tau} = \bot}
           {\declchkjudg{p}{\Psi \ctxsep (\sigma = \tau)}{e}{C}}
     ~~~~~~~~~~
     \Infer{\DeclCheckUnify}
           {\arrayenvbl{
               \mgu{\sigma}{\tau} = \theta
               \\
               \declchkjudg{p}{\theta(\Psi)}{\theta(e)}{\theta(C)}
             }
           }
           {\declchkjudg{p}{\Psi \ctxsep (\sigma = \tau)}{e}{C}}
  \vspace*{-1.0ex}
  \end{mathpar}

\ifnum\OPTIONInAppendix=0
  \caption{Declarative typing, omitting rules for $\times$, $+$, and $\vecop$}
  \FLabel{fig:decl-typing}
\else
  \repeatcaption{mainfig:decl-typing}{a}{Declarative typing, including rules omitted from main paper}
\fi

\end{figure}

\subsection{Typing Judgments}

\Label{sec:decl-typing-judgments}

\mypara{Principality}
Our typing judgments carry \emph{principalities}:
$A\,\p$ means that $A$ is principal, and $A\,\OK$ means $A$ is
not principal.
Note that a principality is part of a judgment, not part of a type.
In the checking judgment $\declchkjudg{p}{\Psi}{e}{A}$
the type $A$ is input; if $p = \p$, we know that $A$ is not the result
of guessing.  For example, the $e$ in $(e : A)$ is checked against $A\;\p$.
In the synthesis judgment $\declsynjudg{p}{\Psi}{e}{A}$, the type $A$ is output,
and $p = \p$ means
it is impossible to synthesize any other type, as in $\declsynjudg{\p}{\Psi}{(e : A)}{A}$.

We sometimes omit a principality when it is $\OK$ (``not principal'').
We write $p \moreprincipal q$, read ``$p$ at least as principal as $q$'', for the
reflexive closure of $\p \moreprincipal \OK$.

\mypara{Spine judgments}
The ordinary form of spine judgment, $\declspinejudg{\Psi}{s}{A}{p}{C}{q}$,
says that if arguments $s$ are passed to a function of type $A$,
the function returns type $C$.  For a function $e$ applied to one argument
$e_1$, we write $e\;e_1$ as syntactic sugar for
$e\;(\appspine{e_1}{\emptyspine})$.  Supposing $e$ synthesizes
$A_1 \arr A_2$, we apply \DeclArrSpine, checking $e_1$ against $A_1$
and using \DeclEmptySpine to derive
$\declspinejudg{\Psi}{\emptyspine}{A_2}{p}{A_2}{p}$.

Rule \DeclAllSpine does not decompose $\appspine{e}{s}$ but
instantiates a $\AllSym$.  Note that, even if the given type
$\alltype{\alpha:\sort}{A}$ is principal ($p = \p$), the type $[\tau/\alpha]A$
in the premise is not principal---we could choose a different $\tau$.
In fact, the $q$ in \DeclAllSpine is also always $\OK$, because no rule
deriving the ordinary spine judgment can recover principality.

The \emph{recovery spine judgment}
$\declrecspinejudg{\Psi}{s}{A}{p}{C}{q}$, however, can restore
principality in situations where the choice of $\tau$ in \DeclAllSpine
cannot affect the result type $C$.
If $A$ is principal ($p = \p$) but
the ordinary spine judgment produces a non-principal $C$,
we can try to recover principality with \DeclRecover.
Its first premise is $\declspinejudg{\Psi}{s}{A}{\p}{C}{\OK}$;
its second premise (really, an infinite set of premises) quantifies
over all derivations of $\declspinejudg{\Psi}{s}{A}{\p}{C'}{\OK}$.
If $C' = C$ in all such derivations, then the ordinary spine rules
erred on the side of caution: $C$ is actually principal, so we can
set $q = \p$ in the conclusion of \DeclRecover.

If some $C' \neq C$, then $C$ is certainly not principal, and we must
apply \DeclPass, which simply transitions from the ordinary judgment
to the recovery judgment.

\Figureref{fig:declgrue} shows the dependencies between the
declarative judgments.  Given the cycle containing the spine typing judgments,
we need to stop and ask:  Is \DeclRecover well-founded?
For well-foundedness of type systems, we can often make a straightforward
argument that, as we move from the conclusion of a rule to its premises,
either the expression gets smaller, or the expression stays the same but the
type gets smaller.  In \DeclRecover, neither the expression nor the type get smaller.
Fortunately, the rule that gives rise to the arrow from ``spine typing''
to ``type checking'' in \Figureref{fig:declgrue}---\DeclArrSpine---\emph{does}
decompose its subject, and any derivations of a recovery judgment lurking
within the second premise of \DeclRecover must be for a smaller spine.
In the appendix (Lemma \ref{fancy:lem:declarative-well-founded},
p.\ \pageref{fancy:PROOFlem:declarative-well-founded}),
we prove that the recovery judgment, and all the
other declarative judgments, are well-founded.

\mypara{Example}  In \Sectionref{sec:typing-examples} we present some
example derivations that illustrate how the spine typing rules work to recover principality.

\mypara{Subtyping} Rule \DeclSub invokes the subtyping judgment, at
the join of the polarities of $B$ (the type being checked against) and
$A$ (the type being synthesized). Using the join ensures that the
polarity of $B$ takes precedence over $A$'s, which means the programmer
control which subtyping mode to begin with via a type annotation.

Furthermore, the subtyping rule allows \DeclSub to play the role of an
existential introduction rule, by applying subtyping rule \DsubExistsR
when $B$ is an existential type.

\mypara{Pattern matching} Rule \DeclCase checks that the scrutinee
has a type and principality, and then invokes the two main judgments for
pattern matching. The $\declmatchjudg[q]{p}{\Psi}{\Pi}{\vec{A}}{C}$ judgement
checks that each branch in the list of branches $\Pi$ is well-typed, taking a vector
$\vec{A}$ of pattern types to simplify the specification of coverage
checking, as a well as a principality annotation covering all of the types
(i.e., if any of the types in $\vec{A}$ is non-principal, the whole vector
is not principal).

The $\covers{q}{\Psi}{\Pi}{\vec{A}}$ judgement does coverage checking
for the list of branches. However, the $\DeclCase$ does not simply
check that the patterns cover for the inferred type of the scrutinee---%
it checks that they cover for \emph{every} possible type that could
be inferred for the scrutinee. In the case that the scrutinee is
principal, this is the same as checking coverage at the scrutinee's
type, but when the scrutinee is not principal, this rule has the
effect of preventing type inference from using the shape of the
patterns to infer a type, which is notoriously problematic with GADTs
(\eg, whether a missing nil in a list match should be taken as evidence
of coverage failure or that the length is non-zero). As with spine recovery,
this rule is only well-founded because the universal quantification ranges
over synthesized types over a subterm.

\begin{figure*}
\runonfontsz{9pt}
  \raggedright
  \judgbox{\declmatchjudg[q]{p}{\Psi}{\Pi}{\Avec}{C}}
     {Under context $\Psi$,
       \\
       check branches $\Pi$ with patterns of type $\Avec$ and bodies of type $C$}
  \begin{mathpar}
    \Infer{\DeclMatchEmpty}
          {}
          {\declmatchjudg[q]{p}{\Psi}{\cdot}{\Avec}{C}}
    ~~~~~
    \Infer{\DeclMatchSeq}
          {\declmatchjudg[q]{p}{\Psi}{\pi}{\Avec}{C} \\
           \declmatchjudg[q]{p}{\Psi}{\Pi}{\Avec}{C} }
          {\declmatchjudg[q]{p}{\Psi}{(\pi \alt \Pi)}{\Avec}{C}}
    \and
    \Infer{\DeclMatchBase}
          {\declchkjudg{p}{\Psi}{e}{C}}
          {\declmatchjudg[q]{p}{\Psi}{(\branch{\cdot}{e})}{\cdot}{C}}
    \and
    \Infer{\DeclMatchUnit}
          { 
            \declmatchjudg[q]{p}{\Psi}{\branch{\patvec}{e}}{\Avec}{C} }
          { \declmatchjudg[q]{p}{\Psi}{\branch{\unitexp, \patvec}{e}}{\unitty, \Avec}{C} }

    \\ 
    \Infer{\!\runonfontsz{8pt}\DeclMatchExists}
          {\declmatchjudg[q]{p}{\Psi, \alpha:\sort}{\branch{\patvec}{e}}{A,\Avec}{C} }
          {\declmatchjudg[q]{p}{\Psi}{(\branch{\patvec}{e})}{(\extype{\alpha:\sort} A),\Avec}{C}}
    \Infer{\!\runonfontsz{8pt}\DeclMatchPair}
          {\declmatchjudg[q]{p}{\Psi}{\branch{\pat_1, \pat_2, \patvec}{e}}{A_1, A_2, \Avec}{C}}
          {\declmatchjudg[q]{p}{\Psi}{\branch{\pair{\pat_1}{\pat_2}, \patvec}{e}}{(A_1 \times A_2), \Avec}{C}}
    \and
    \Infer{\!\DeclMatchSum{k}}
          {\declmatchjudg[q]{p}{\Psi}{\branch{\pat, \patvec}{e}}{A_k, \Avec}{C}}
          {\declmatchjudg[q]{p}{\Psi}{\branch{\inj{k}{\pat}, \patvec}{e}}{A_1 + A_2, \Avec}{C} }
    \and
    \Infer{\!\DeclMatchWith}
          {\declmatchelimjudg{p}{\Psi}{\branch{\patvec}{e}}{P}{A, \Avec}{C} }
          {\declmatchjudg[!]{p}{\Psi}{\branch{\patvec}{e}}{(A \with P), \Avec}{C} }
    ~~
    \Infer{\!\DeclMatchWithOK}
          {
            \declmatchjudg[\OK]{p}{\Psi}{\branch{\patvec}{e}}{A, \Avec}{C}
          }
          {
            \declmatchjudg[\OK]{p}{\Psi}{\branch{\patvec}{e}}{(A \with P), \Avec}{C}
          }
    \\
    \Infer{\DeclMatchConsOK}
          {
            \declmatchjudg[\OK]{p}
                              {\Psi, \alpha : \kindnat}
                              {\branch{\pat_1, \pat_2, \patvec}{e}}
                              {A, (\vectype{\alpha}{A}), \Avec}
                              {C}
          }
          {
            \declmatchjudg[\OK]{p}{\Psi}{\branch{(\veccons{\pat_1}{\pat_2}), \patvec}{e}}{(\vectype{t}{A}), \Avec}{C}
          }
    \and
    \Infer{\DeclMatchCons}
          {
            \declmatchelimjudg{p}
                              {\Psi, \alpha : \kindnat}
                              {\branch{\pat_1, \pat_2, \patvec}{e}}
                              {(t = \succ{\alpha})}
                              {A, (\vectype{\alpha}{A}), \Avec}
                              {C}
          }
          {
            \declmatchjudg[!]{p}{\Psi}{\branch{(\veccons{\pat_1}{\pat_2}), \patvec}{e}}{(\vectype{t}{A}), \Avec}{C}
          }
    \\
    \Infer{\!\DeclMatchNilOK}
          {
            \declmatchjudg[\OK]{p}{\Psi}{\branch{\patvec}{e}}{\Avec}{C} 
          }
          {
            \declmatchjudg[\OK]{p}{\Psi}{\branch{\vecnil, \patvec}{e}}{(\vectype{t}{A}), \Avec}{C}
          }
    \Infer{\!\DeclMatchNil}
          {
            \declmatchelimjudg{p}{\Psi}{\branch{\patvec}{e}}{(t = \zero)}{\Avec}{C} 
          }
          {
            \declmatchjudg[!]{p}{\Psi}{\branch{\vecnil, \patvec}{e}}{(\vectype{t}{A}), \Avec}{C}
          }
    \\
    \Infer{\!\DeclMatchNeg}
          {
            \arrayenvbl{
              \notWithExists{A}
              \\
              \declmatchjudg[q]{p}{\Psi, \hyp{\p}{x}{A}}{\branch{\patvec}{e}}{\Avec}{C}
            }
          }
          { \declmatchjudg[q]{p}{\Psi}{\branch{x, \patvec}{e}}{A, \Avec}{C} }
    \and
    \Infer{\!\DeclMatchWild}
          {
            \arrayenvbl{
              \notWithExists{A}
              \\
              \declmatchjudg[q]{p}{\Psi}{\branch{\patvec}{e}}{\Avec}{C}
            }
          }
          { \declmatchjudg[q]{p}{\Psi}{\branch{\wild, \patvec}{e}}{A, \Avec}{C} }
  \end{mathpar}

  \medskip

  \judgbox{\declmatchelimjudg{p}{\Psi}{\Pi}{P}{\Avec}{C}}
       {Under context $\Psi$, incorporate proposition $P$ while checking branches $\Pi$
         \\
         with patterns of type $\Avec$ and bodies of type $C$}
  \begin{mathpar}
     \Infer{\DeclMatchBot}
           {\mgu{\sigma}{\tau} = \bot}
           {\declmatchelimjudg{p}{\Psi}{\branch{\patvec}{e}}{\sigma = \tau}{\Avec}{C}}
     \and 
     \Infer{\DeclMatchUnify}
           {\mgu{\sigma}{\tau} = \theta \\
            \declmatchjudg[q]{p}{\theta(\Psi)}{\theta(\branch{\patvec}{e})}{\theta(\Avec)}{\theta(C)}
           }
           {\declmatchelimjudg{p}{\Psi}{\branch{\patvec}{e}}{\sigma = \tau}{\Avec}{C}}
\vspace*{-1ex}
\end{mathpar}

  \caption{Declarative pattern matching}
  \FLabel{fig:decl-pattern-matching}
\end{figure*}

The $\declmatchjudg[q]{p}{\Psi}{\Pi}{\vec{A}}{C}$ judgment
(rules in
Figure~\ref{%
fig:decl-pattern-matching})
systematically checks the
typing of each branch in $\Pi$: rule \DeclMatchEmpty
succeeds on the empty list, and \DeclMatchSeq checks one
branch and recurs on the remaining branches.
Rules for sums, units, and products break down patterns
left to right, one constructor at a time.
Products also extend the
sequences of patterns and types, with \DeclMatchPair breaking down a
pattern vector headed by a pair pattern $\pair{p}{p'}, \vec{p}$ into
$p, p', \vec{p}$, and breaking down the type sequence
$(A \times B), \vec{C}$ into $A, B, \vec{C}$.
Once all the patterns are eliminated,
the \DeclMatchBase rule says that if the body typechecks, then
the branch typechecks. For completeness, the variable and wildcard
rules are restricted so that any top-level existentials and
equations are eliminated before discarding the type.

The existential elimination rule \DeclMatchExists unpacks an
existential type, and \DeclMatchWith breaks apart a conjunction by
eliminating the equality using unification. The \DeclMatchBot rule
says that if the equation is false then typing succeeds,
because this case is impossible.
The \DeclMatchUnify rule unifies the
two terms of an equation and applies the substitution before
continuing to check typing. Together, these two rules implement the
Schroeder-Heister equality elimination rule.  Because our language of
terms has only simple first-order terms, either unification will fail,
or there is a most general unifier. Note, however, that \DeclMatchWith
only applies when the pattern type is principal.  Otherwise, we use
the \DeclMatchWithOK rule, which throws away the equation and does not
refine any types at all. In this way, we can ensure that we will only
try to eliminate equations which are fully known (i.e., principal).
Similar considerations apply to vectors, with length information being
used to refine types only when the type of the scrutinee is principal.

\begin{figure}
\raggedright
\runonfontsz{8.7pt}
  \judgbox{
    \arrayenvcl{
    \declcovers[p]{\Psi}{\Pi}{\Avec}
    \\
    \declcoverseq{\Psi}{P}{\Pi}{\Avec}
    \\
    \guarded{\Pi}
  }
  }
     {Patterns $\Pi$ cover the types $\Avec$ in context $\Psi$
       \\[3pt]
      Patterns $\Pi$ cover the types $\Avec$ in context $\Psi$, assuming $P$
      \\[3pt]
      Pattern list $\Pi$ contains a list pattern constructor at the head position
    }
  \vspace{-1.3ex}
  \begin{mathpar}
    \Infer{\DeclCoversEmpty}
          { }
          {\declcovers[p]{\Psi}{(\branch{\cdot}{e_1}) \alt \Pi'}{\cdot}}
    \and
    \Infer{\DeclCoversVar}
          {\expandvar{\Pi}{\Pi'} \\
           \declcovers[p]{\Psi}{\Pi'}{\Avec}
          }
          { \declcovers[p]{\Psi}{\Pi}{A, \Avec}
          }
    \and
    \Infer{\DeclCoversUnit}
          {\expandunit{\Pi}{\Pi'} \\
           \declcovers[p]{\Psi}{\Pi'}{\Avec}
          }
          { \declcovers[p]{\Psi}{\Pi}{\unitty, \Avec}
          }
    ~~~
    \Infer{\DeclCoversTimes}
          { \expandpair{\Pi}{\Pi'} \\
            \declcovers[p]{\Psi}{\Pi'}{A_1, A_2, \Avec} 
          }
          { \declcovers[p]{\Psi}{\Pi}{(A_1 \times A_2), \Avec}
          }
    \and
    \Infer{\DeclCoversSum}
          {
            \expandsum{\Pi}{\Pi_L}{\Pi_R}
            \\
            \arrayenvbl{
              \declcovers[p]{\Psi}{\Pi_L}{A_1, \Avec}
              \\
              \declcovers[p]{\Psi}{\Pi_R}{A_2, \Avec} 
            }
          }
          { \declcovers[p]{\Psi}{\Pi}{(A_1 + A_2), \Avec}
          }
    \and
    \Infer{\DeclCoversEx}
          {
           \declcovers[p]{\Psi, \alpha : \sort}{\Pi}{A, \Avec}
          }
          { \declcovers[p]{\Psi}{\Pi}{(\extype{\alpha:\sort} A), \Avec}
          }
    \\
    \Infer{\DeclCoversWith}
          { \declcoverseq{\Psi}{t_1 = t_2}{\Pi}{A_0, \Avec} }
          { \declcovers{\Psi}{\Pi}{\big(A_0 \with (t_1 = t_2)\big), \Avec} 
          }
    \and
    \Infer{\DeclCoversWithOK}
          { \declcovers[\OK]{\Psi}{\Pi}{A_0, \Avec} }
          { \declcovers[\OK]{\Psi}{\Pi}{\big(A_0 \with (t_1 = t_2)\big), \Avec} 
          }
    \\
    \Infer{\DeclCoversVec}
          {
           \guarded{\Pi} \and 
            \expandvec{\Pi}{\Pi_{[]}}{\Pi_{::}}
            \\
            \arrayenvbl{
              \declcoverseq{\Psi}{t = \zero}{\Pi_{[]}}{\Avec}
              \\
              \declcoverseq{\Psi, n:\ind}{t = \succ{n}}{\Pi_{::}}{(A, \vectype{n}{A}, \Avec)} 
            }
          }
          { \declcovers{\Psi}{\Pi}{\vectype{t}{A}, \Avec} }
    \and
    \Infer{\DeclCoversVecOK}
          {
           \guarded{\Pi} \and 
            \expandvec{\Pi}{\Pi_{[]}}{\Pi_{::}}
            \\
            \arrayenvbl{
              \declcovers[\OK]{\Psi}{\Pi_{[]}}{\Avec}
              \\
              \declcovers[\OK]{\Psi, n:\ind}{\Pi_{::}}{(A, \vectype{n}{A}, \Avec)} 
            }
          }
          { \declcovers[\OK]{\Psi}{\Pi}{\vectype{t}{A}, \Avec} }
   \vspace*{-0.2ex}
    \\
    \Infer{\DeclCoversEq}
          {
              \mgu{t_1}{t_2} = \theta
              \\
              \declcovers{\theta(\Psi)}{\theta(\Pi)}{\theta(\Avec)}
          }
          { \declcoverseq{\Psi}{t_1 = t_2}{\Pi}{\Avec} 
          }
    ~~~~
    \Infer{\DeclCoversEqBot}
          {\mgu{t_1}{t_2} = \bot}
          {\declcoverseq{\Psi}{t_1 = t_2}{\Pi}{\Avec}}
          \\
  \Infer{ }
        { }
        {\guarded{\branch{\vecnil, \vec{p}}{e} \alt \Pi}}
  ~~~
  \Infer{ }
        { }
        {\guarded{\branch{\veccons{p}{p'}, \vec{p}}{e} \alt \Pi}}
  ~~~
  \Infer{ }
        {\guarded{\Pi} }
        {\guarded{\branch{\wild, \vec{p}}{e} \alt \Pi}}
  ~~~
  \Infer{ }
        {\guarded{\Pi} }
        {\guarded{\branch{x, \vec{p}}{e} \alt \Pi}}
  \end{mathpar}

  \caption{Match coverage}
  \FLabel{fig:decl-match-coverage}
\end{figure}

\begin{figure}
\runonfontsz{8.7pt}
\raggedright
\judgbox{\expandvec{\Pi}{\Pi_{[]}}{\Pi_{::}}}
           {Expand vector patterns in $\Pi$}
  \vspace*{-4.3ex}
  \begin{mathpar}
     \hspace*{15ex}\Infer{}
           { }
           {\expandvec{\cdot}{\cdot}{\cdot}}
     \and
     \Infer{}
          { \pat \in \setof{x, \wild}
            \\
            \expandvec{\Pi}{\Pi_{[]}}{\Pi_{::}}
          }
          {\expandvec{\big(\branch{\pat, \pvec}{e}\big) \alt \Pi }
                     {\big(\branch{\pvec}{e}\big) \alt \Pi_{[]}}
                     {\big(\branch{\wild, \wild, \pvec}{e}\big) \alt \Pi_{::}}
          }
     \and
     \Infer{}
           {\expandvec{\Pi}{\Pi_{[]}}{\Pi_{::}} }
           {\expandvec{\big(\branch{[], \pvec}{e}\big) \alt \Pi }
                      {\big(\branch{\pvec}{e}\big) \alt \Pi_{[]}}
                      {\Pi_{::}}
           }
     \and
     \Infer{}
           {\expandvec{\Pi}{\Pi_{[]}}{\Pi_{::}} }
           {\expandvec{\big(\branch{(\veccons{\pat}{\pat'}, \pvec}{e}\big) \alt \Pi }
                      {\Pi_{[]}}
                      {\big(\branch{\pat, \pat', \pvec}{e}\big) \alt \Pi_{::}}
           }
  \end{mathpar}
   
  \judgbox{\expandpair{\Pi}{\Pi'}}
     {~\\[-16pt] Expand head pair patterns in $\Pi$}
  \vspace{-3.0ex}
  \begin{mathpar}
     \Infer{}
           { }
           {\expandpair{\cdot}{\cdot}}
     ~
     \Infer{}
           {\expandpair{\Pi}{\Pi'}  }
           {\expandpair{\big( \branch{\pair{\pat_1}{\pat_2}, \pvec}{e} \big) \alt \Pi}
                       { \big(\branch{\pat_1, \pat_2, \pvec}{e}\big) \alt \Pi'}
            }
    ~
    \Infer{}
          {\pat \in \setof{z, \wild}
           \\
           \expandpair{\Pi}{\Pi'}  }
          {\expandpair{\big(\branch{\pat, \pvec}{e}\big) \alt \Pi }
                      {\big(\branch{\wild, \wild, \pvec}{e}\big) \alt \Pi' }
          }
  \end{mathpar}

  \judgbox{\expandsum{\Pi}{\Pi_L}{\Pi_R}}
     {Expand head sum patterns in $\Pi$ into left $\Pi_L$ and right $\Pi_R$ sets}
  \vspace{-2.7ex}
  \begin{mathpar}
     \Infer{}
           { }
           {\expandsum{\cdot}{\cdot}{\cdot}}
     \and
     \Infer{}
          { \pat \in \setof{x, \wild}
            \\
            \expandsum{\Pi}{\Pi_L}{\Pi_R}
          }
          {\expandsum{\big(\branch{\pat, \pvec}{e}\big) \alt \Pi }
                     {\big(\branch{\wild, \pvec}{e}\big) \alt \Pi_L}
                     {\big(\branch{\wild, \pvec}{e}\big) \alt \Pi_R}
          }
     \and
     \Infer{}
           {\expandsum{\Pi}{\Pi_L}{\Pi_R} }
           {\expandsum{\big(\branch{\inj{1}{\pat}, \pvec}{e}\big) \alt \Pi }
                      {\big(\branch{\pat, \pvec}{e}\big) \alt \Pi_L}
                      {\Pi_R}
           }
     ~~
     \Infer{}
           {\expandsum{\Pi}{\Pi_L}{\Pi_R} }
           {\expandsum{\big(\branch{\inj{2}{\pat}, \pvec}{e}\big) \alt \Pi }
                      {\Pi_L}
                      {\big(\branch{\pat, \pvec}{e}\big) \alt \Pi_R}
           }
  \end{mathpar}

  \begin{minipage}{0.45\linewidth}
  \judgbox{\expandvar{\Pi}{\Pi'}}
     {Remove head variable
        \\ and wildcard patterns from $\Pi$}
  \[
    \Infer{}
          { }
          {\expandvar{\cdot}{\cdot}}
    ~~~~~
    \Infer{}
          { \pat \in \setof{x, \wild} 
           \\
           \expandvar{\Pi}{\Pi'}}
          { \expandvar{\big(\branch{\pat, \pvec}{e}\big) \alt \Pi }
                     {\big(\branch{\pvec}{e}\big) \alt \Pi' }
                     }
  \]
  \end{minipage}
   ~~~~~~~
  \begin{minipage}{0.45\linewidth}
  \judgbox{\expandunit{\Pi}{\Pi'}}
     {Remove head variable, wildcard,
       \\ and unit patterns from $\Pi$}
  \[
    \Infer{}
          { }
          {\expandunit{\cdot}{\cdot}}
    ~~~~~~~~~~
    \Infer{}
          { \pat \in \setof{x, \wild, \unitexp}
           \\
           \expandunit{\Pi}{\Pi'}}
          { \expandunit{\big(\branch{\pat, \pvec}{e}\big) \alt \Pi }
                     {\big(\branch{\pvec}{e}\big) \alt \Pi' }
                     }
  \]
  \end{minipage}

  \vspace*{-1ex}
  
  \caption{Pattern expansion}
  \FLabel{fig:decl-pattern-expansion}
\end{figure}

The $\declcovers[p]{\Psi}{\Pi}{\Avec}$ judgment
(in
Figure~\ref{%
fig:decl-match-coverage})
checks whether a set of patterns
covers all possible cases. As with match typing, we
systematically deconstruct the sequence of types in the
branch, but we also need auxiliary operations to
\emph{expand} the patterns. For example, the $\expandpair{\Pi}{\Pi'}$
operation takes every branch $\branch{\pair{p}{p'}, \pvec}{e}$ and
expands it to $\branch{{p}, {p'}, \pvec}{e}$. To keep the sequence of
patterns aligned with the sequence of types, we also expand
variables and wildcard patterns into two wildcards:
$\branch{x, \pvec}{e}$ becomes $\branch{\wild, \wild, \pvec}{e}$.
After expanding out all the pairs, \DeclCoversTimes checks coverage
by breaking down the pair type.  

For sum types, we expand a list of branches into \emph{two} lists,
one for each injection. So
$\expandsum{\Pi}{\Pi_L}{\Pi_R}$ will send all branches headed by $\inl{p}$
into $\Pi_L$ and all branches headed by $\inr{p}$ into $\Pi_R$, with
variables and wildcards being sent to both sides. Then \DeclCoversSum
checks the left and right branches independently.

As with typing, \DeclCoversEx just unpacks the existential type.
Likewise, \DeclCoversEqBot and \DeclCoversEq handle the two cases
arising from equations.  If an equation is unsatisfiable,
coverage succeeds since there are no possible values of that type. 
If it is satisfiable, we apply the substitution and continue
coverage checking. Just as when typechecking patterns, we only
use property types to refine coverage checking when the equations
come from a principal type --- the \DeclCoversWithOK rule simply
throws away the equation when the type is not principal. (This is a
sound approximation which ends up requiring more patterns when
the type is not principal.)

So far, the coverage rules for pattern matching are almost purely
type-directed. However, once recursive types like $\vectype{n}{A}$
enter the picture, matters become a little more subtle. The issue is
that if we split a wildcard $\wild$ of type $\vectype{n}{A}$, the type
\emph{doesn't tell us when to stop}. That is, we could split a
wildcard into a nil $\vecnil$ and cons $\veccons{\wild}{\wild}$
pattern; or we could turn it into a nil $\vecnil$, singleton
$\veccons{\wild}{\vecnil}$ and two-or-longer
$\veccons{\wild}{\veccons{\wild}{\wild}}$ pattern; and so on.  The key
issue is that the tail of a list has the same set of possible patterns
as the list itself, and so blindly following the type structure will
not ensure termination of coverage checking.

In this paper, we take the view that the patterns the programmer wrote
should guide how much to split types when doing coverage checking for
inductive types. In Figure~\ref{fig:decl-match-coverage}, we introduce
the $\guarded{\Pi}$ judgement, which checks to see if a constructor
pattern is present in the leading column of patterns. If it is, then
our algorithm will unfold the recursive type as part of type checking,
and otherwise it will not. This is by no means a canonical choice: our
choice is similar to the choice Agda makes, but other language
implementations make other choices. In contrast, the \Ocaml coverage
checking algorithm unfolds wildcard patterns at GADT type one step
more than what the programmer wrote~\cite{le-normand}. (They also
observe that precise exhaustiveness checking is undecidable, meaning
that some choice of heuristic is unavoidable.)

\subsection{Design Considerations for Pattern Matching}

\paragraph{Evaluation Order} Our typing and coverage checking rules
are given assuming a \emph{call-by-value} evaluation strategy. These
coverage rules are not sound under a call-by-name evaluation
order. Consider the following program, writing $\bot$ for a looping term:
$$\case{\bot : A \with (s = t)}{\branch{x}{e}}$$

When type-checking this program, the \DeclMatchWith and
\DeclCoversWith rules are permitted to eliminate the equality $s = t$
when checking $e$. However, one can use a looping program to inhabit
$A \with (s=t)$ for any $P$, and so we have introduced a spurious
equality into the context when checking $e$. In contrast, in a
call-by-value language the scrutinee of a case will be reduced before
the match proceeds, so this issue cannot arise. (In a total language
such as Koka, these rules would be sound irrespective of evaluation
order, since all evaluation strategies are indistinguishable.)

\paragraph{Redundant Patterns} These rules do not check for
redundancy: \DeclCoversEmpty applies even when branches are left over.
When \DeclCoversEmpty is applied, we could mark the
$\branch{\cdot}{e_1}$ branch, and issue a warning for unmarked
branches.  This seems better as a warning than an error, since
redundancy is not stable under substitution.  For example, a case over
$(\tyname{Vec}\;n\;A)$ with $\vecnil$ and $\vecconssym$ branches is
not redundant---but if we substitute $0$ for $n$, the $\vecconssym$
branch becomes redundant.

\mypara{Synthesis}
Bidirectional typing is a form of partial type inference, which
\citet{Pierce00} said should ``eliminate especially those type
annotations that are both \emph{common} and \emph{silly}''.
But our rules are rather parsimonious in what they synthesize;
for instance, $\unitexp$ does not synthesize $\unitty$, and so might
need an annotation.
Fortunately, it would be straightforward to add such rules, following
the style of \citet{Dunfield13}.

\section{Algorithmic Typing}
\Label{sec:alg-typing}

\begin{figure}
\newcommand{\Room}[2]{
  \!\!\!\begin{tabular}[t]{l}
    #1
    \\{}%
    \ensuremath{#2}%
  \end{tabular}\!\!\!%
}
\small
    $~$\!\!\!\!\!\begin{tikzpicture}
    \renewcommand{\CompactJudgments}{1}
      [auto, node distance=2cm, >=latex,  %
       descr/.style= {fill=white, inner sep=4pt, anchor=center}
      ]

      \node (sub) {\Room{subtyping}{\subjudg[\polvar]{\Gamma}{A}{B}{\Delta}}};

      \node [above of=sub, node distance=72pt] (inst) {\Room{instantiation}{\instjudg{\Gamma}{\ahat}{t}{\sort}{\Delta}}};

      \node [below of=sub, node distance=38pt] (chk) {\Room{type checking}{\chkjudg{p}{\Gamma}{e}{A}{\Delta}}};

      \node [below of=inst, node distance=36pt] (equiv) {\Room{equiv.\ types}{\equivjudg{\Gamma}{A}{B}{\Delta}}};

      \node [below of=chk, left=-12pt, node distance=45pt] (syn) {\Room{type synthesis}{\synjudg{p}{\Gamma}{e}{B}{\Delta}}};

      \node [left of=inst, node distance=100pt] (checkeq) {\Room{check equation}{\checkeq{\Gamma}{t_1}{t_2}{\sort}{\Delta}}};
      \draw [->] (checkeq) -- (inst);
      \node [left of=equiv, node distance=80pt] (propequiv) {\Room{equiv.\ props.}{\propequivjudg{\Gamma}{P}{Q}{\Delta}}};
      \draw [->] (propequiv) -- (checkeq);
      \node [left of=sub, node distance=110pt] (checkprop) {\Room{check prop.}{\checkprop{\Gamma}{P}{\Delta}}};
      \path [->, bend left] (checkprop) edge (checkeq);
      \draw [->] (equiv) -- (propequiv);
      \draw [->] (equiv) -- (inst);
      \draw [->] (sub) -- (equiv);

      \draw [->] (chk) -- (sub);

      \node [left of=chk, node distance=100pt] (spine) {\small \Room{spine typing}{\spinejudg{\Gamma}{s}{A}{p}{B}{q}{\Delta}}};

      \node [below of=spine, left=20pt, node distance=38pt] (recspine) {\small\Room{\begin{tabular}{l} principality-recovering \\ spine typing \end{tabular}}{\recspinejudg{\Gamma}{s}{A}{p}{B}{q}{\Delta}}\!\!\!\!};

      \draw [->] (spine) -- (checkprop);
      \path [->, %
      line width=1pt, color=dRed] (syn) edge (recspine);
      \draw [->, line width=1pt, color=dRed] (recspine) -- (spine);
      \draw [<->, line width=1pt, color=dRed] (chk) -- (syn);
      \draw [->] (chk) -- (checkprop);
      \draw [->, line width=1pt, color=dRed] (spine) -- (chk);
      \node [below of=chk, right=60pt, node distance=40pt] (match) {\Room{pattern matching}{\matchjudg{q}{p}{\Gamma}{\Pi}{\Avec}{C}{\Delta}}};
      \node [right of=chk, right=20pt, node distance=71pt] (elimeq) {\Room{equality elim.}{\elimeq{\Gamma}{s\!}{\!t\!}{\!\sort}{\Deltabot}}};

      \node [right of=sub, below=1pt, right=25pt, node distance=15pt] (covers) {\Room{coverage}{\covers{q}{\Gamma}{\Pi}{\Avec}}};
      \draw [->] (chk) -- (covers);
      \draw [->] (covers) -- (elimeq);
      \draw [<->, line width=1pt, color=dRed] (match) -- (chk);
      \draw [->] (match) -- (elimeq);

    \end{tikzpicture}
\vspace*{-1.0ex}

\caption{Dependency structure of the algorithmic judgments}
\FLabel{fig:grue}
\end{figure}

Our algorithmic rules closely mimic our declarative rules,
except that whenever a declarative rule
would make a guess, the algorithmic rule adds to the context an existential
variable (written with a hat $\ahat$). As typechecking proceeds, we
add solutions to the existential variables, reflecting
increasing knowledge. Hence, each declarative typing
judgment has a corresponding algorithmic judgment with an
output context as well as an input context.  The algorithmic type checking
judgment $\chkjudg{p}{\Gamma}{e}{A}{\Delta}$ takes an input context
$\Gamma$ and yields an output context $\Delta$ that includes
increased knowledge about what the types have to be.  The notion
of increasing knowledge is formalized by a judgment
$\substextend{\Gamma}{\Delta}$ (\Sectionref{sec:context-extension}).

Figure~\ref{fig:grue} shows a dependency graph of the algorithmic judgments.
Each declarative judgment has a corresponding algorithmic judgment,
but the algorithmic system adds judgments such as type
equivalence checking
$\equivjudg{\Gamma}{A}{B}{\Delta}$ and variable instantiation
$\instjudg{\Gamma}{\ahat}{t}{\sort}{\Delta}$.
Declaratively, these
judgments correspond to uses of reflexivity axioms;
algorithmically, they correspond to solving existential
variables to equate terms. 

We give the algorithmic typing rules in Figure~\ref{fig:alg-typing};
rules for most other judgments are in the appendix.
Our style of specification broadly follows \mbox{\citet{Dunfield13}}:
we adapt their mechanisms of variable instantiation,
context extension, and context application (to both types and other contexts).
Our versions of these mechanisms, however, support
indices, equations over universal variables, and the $\exists$/$\implies$/$\with$
connectives.  We also differ in our formulation of spine typing,
and by being able to track which types are principal.

\subsection{Examples}
\label{sec:typing-examples}
To show how the spine typing rules %
recover principality,
we present some example derivations.

\newcommand{\Thespine}{(\appspine{\unitexp}{\emptyspine})}
\newcommand{\ahu}{\ahat : \type = \unitty}

Suppose we have an identity function $id$, defined in an algorithmic context
$\Gamma$ by the hypothesis $\hyp{\p}{id}{(\alltype{\alpha:\type} \alpha \arr \alpha)}$.
Since the hypothesis has $\p$, the type of $id$ is known to be principal.
If we apply $id$ to $\unitexp$, we expect to get something of unit type $\unitty$.
Despite the $\forall$ in the type of $id$,
the resulting type should be principal, because no other type is possible.
We can indeed derive that type:
\[
  \Infer{\ArrElim}
       {
        \Infer{\Var}
            {(\hyp{\p}{id}{(\alltype{\alpha:\type} \alpha \arr \alpha)}) \in \Gamma}
            {\synjudg{\p}{\Gamma}{id}{(\alltype{\alpha:\type} \alpha \arr \alpha)}{\Gamma}}
        ~~
              \recspinejudg{\Gamma}{\Thespine}{(\alltype{\alpha:\type} \alpha \arr \alpha)}{\p}{\unitty}{{\p}}{\Gamma, \ahu}
       }
       {\synjudg{\p}{\Gamma}{id\;\Thespine}{\unitty}{\Gamma, \ahu}}
\]
(Here, we write the application $id\;\unitexp$ as $id\;\Thespine$, to show the
structure of the spine as analyzed by the typing rules.)
In the derivation of the second premise of \ArrElim, shown below, we can follow
the evolution of the principality marker.
\vspace*{-1.3ex}
\[
\hspace*{11.5ex}
        \Infer{\!\Recover}
            {
              \hspace*{-1ex}
              \Infer{\!\AllSpine}
                 {
                   \hspace*{-11ex}
                   \Infer{\ArrSpine}
                      {
                        \Infer{\UnitIntroSolve}
                            {}
                            {\chkjudg{\OK}{\Gamma, \ahat{:}\type}{\unitexp}{\ahat}{\Gamma, \ahu}}
                        ~~~
                        \Infer{\EmptySpine}
                             {}
                             {
                               \spinejudg{\Gamma, \ahu}{\emptyspine}{\unitty}{\OK}{\unitty}{\OK}{\Gamma, \ahu}
                             }
                            \hspace*{-12ex}
                      }
                      {
                        \spinejudg{\Gamma, \ahat{:}\type}{\Thespine}{\ahat \arr \ahat}{\fighi{\OK}}{\unitty}{\OK}{\Gamma, \ahu}
                      }
                    \hspace*{-25ex}
                 }
                 {
                   \spinejudg{\Gamma}{\Thespine}{(\alltype{\alpha:\type} \alpha \arr \alpha)}{\p}{\unitty}{\fighi{\OK}}{\Gamma, \ahu}
                 }
              ~
              \FEV{\unitty} = \emptyset
            }
            {
              \recspinejudg{\Gamma}{\Thespine}{(\alltype{\alpha:\type} \alpha \arr \alpha)}{\underbrace{\p}_{\!\!\!\text{input}\!\!\!}}{\unitty}{\fighi{\p}}{\Gamma, \ahu}
            }
\]
\begin{itemize}
\item The input principality (marked ``input'') is $\p$, because the input type
  $(\alltype{\alpha:\type} \alpha \arr \alpha)$ was marked as principal in the
  hypothesis typing $id$.
\item Rule \Recover begins by invoking the ordinary (non-recovering) spine judgment,
  passing all inputs unchanged, including the principality $\p$.
\item Rule \AllSpine adds an existential variable $\ahat$ to represent the instantiation
  of the quantified type variable $\alpha$, and substitutes $\ahat$ for $\alpha$.
  Since this instantiation is, in general, \emph{not} principal, it replaces $\p$ with
  $\OK$ (highlighted) in its premise.  This marks the type $\ahat \arr \ahat$ as
  non-principal.
\item Rule \ArrSpine decomposes $\ahat \arr \ahat$ and checks $\unitexp$
  against $\ahat$, maintaining the principality $\OK$.
  Once principality is lost, it can only be recovered within the \Recover rule itself.
\item Rule \UnitIntroSolve notices that we are checking $\unitexp$ against
  an unknown type $\ahat$; since the expression is $\unitexp$, the type $\ahat$
  must be $\unitty$, so it adds that solution to its output context.
\item Moving to the second premise of \ArrSpine, we analyze the remaining
  part of the spine.  That is just the empty spine $\emptyspine$, and rule \EmptySpine
  passes its inputs along as outputs.  In particular, the principality $\OK$ is unchanged.
\item The principalities are passed down %
  to the conclusion of $\AllSpine$,
  where $\OK$ is highlighted.
\item In \Recover, we notice that the output type $\unitty$ has no existential variables
  ($\FEV{\unitty} = \emptyset$),
  which allows us to recover principality of the output type: $\symrecspine{\fighi{\p}}$.
\end{itemize}
In the corresponding derivation in our declarative system, we have, instead, a
check that no other types are derivable:
\vspace*{-2.0ex}
\[
        \Infer{\!\DeclRecover}
            {
              \Infer{\!\!\DeclAllSpine}
                 {
                   \judge{\Psi}{\unitty}{\type}
                   ~
                   \Infer{\!\!\DeclArrSpine}
                      {
                        \Infer{\!\DeclUnitIntro}
                            {}
                            {\declchkjudg{\OK}{\Psi}{\unitexp}{\unitty}}
                        ~~
                        \Infer{\!\!\DeclEmptySpine}
                             {}
                             {
                               \declspinejudg{\Psi}{\emptyspine}{\unitty}{\OK}{\unitty}{\OK}
                             }
                            \hspace*{-16ex}
                      }
                      {
                        \declspinejudg{\Psi}{\Thespine}{\fighi{\unitty} \arr \fighi{\unitty}}{\fighi{\OK}}{\unitty}{\OK}
                      }
                    \hspace*{-18ex}
                 }
                 {\declspinejudg{\Psi}{\Thespine}{(\alltype{\alpha:\type} \alpha \arr \alpha)}{\p}{\unitty}{\fighi{\OK}}}
              ~
              \hspace*{9.7ex}
              \arrayenvbl{
                 \text{for all $C'$.\;\;if}
                 \\
                     \text{~}
                     \SPLdeclspinejudg
                         {\Psi}
                         {\Thespine}
                         {(\alltype{\alpha{:}\type} \alpha{\arr}\alpha)}
                         {\p}
                         {C'}
                         {\OK}
                     \\
                     \text{~then~}C' = \unitty
               }
               \hspace*{-14ex}
            }
            {
              \declrecspinejudg{\Psi}{\Thespine}{(\alltype{\alpha:\type} \alpha \arr \alpha)}{\underbrace{\p}_{\!\!\!\text{input}\!\!\!}}{\unitty}{\fighi{\p}}
            }
\]
Here, we highlight the replacement in \DeclAllSpine of the quantified type variable
$\alpha$ by the ``guessed'' solution $\unitty$.  The second premise of \DeclRecover
checks that no other output type $C'$ could have been produced, no matter what solution
was chosen by \DeclAllSpine for $\alpha$.

\mypara{Syntax}
Expressions are the same as in the declarative system.

\begin{figure}[t]

   \begin{bnfarray}
    \text{Universal variables}\hspace{-1ex} & \alpha, \beta, \gamma &
    \\[1pt]
    \text{Existential variables}\hspace{-1ex} & \ahat, \bhat, \chat &
    \\[1pt]
    \text{Variables} & u & \bnfas & \alpha \bnfalt \ahat
    \\[1pt]
    \text{Types} & A, B, C\! & \bnfas &
          \unitty
          \bnfalt  A \arr B
          \bnfalt  A + B
          \bnfalt  A \times B
          \bnfalt  \alpha
          \bnfalt  \fighi{\ahat}
          \bnfalt  \alltype{\alpha:\sort}{A}
          \bnfalt  \extype{\alpha:\sort}{A}
    \\ &&&\!\!\!
          \bnfalt  P \implies A
          \bnfalt  A \with P
          \bnfalt  \vectype{t}{A}
    \\[1pt]
    \text{Propositions} & P,Q & \bnfas & t = t'      
    \\[1pt]
    \text{Binary connectives}\hspace{-2ex} & \binc & \bnfas &
           {\arr}
           \bnfalt  {+}
           \bnfalt  {\times}
    \\[1pt]
      \text{Terms/monotypes} & t, \tau, \sigma & \bnfas &
            \zero
            \bnfalt  \succ{t}
            \bnfalt  \unitty
            \bnfalt  \alpha
            \bnfalt  \ahat
            \bnfalt  \tau \arr \sigma
            \bnfalt  \tau + \sigma
            \bnfalt  \tau \times \sigma
\\[1pt]
      \text{Contexts} & \Gamma, \Delta, \Theta & \bnfas &
                  \cdot
                  \bnfalt  \Gamma, u:\sort
              \;\bnfalt  \Gamma, \hyp{p}{x}{A}
              \,\bnfalt  \Gamma, \hypeq{\ahat:\sort}{\tau}
              \,\bnfalt  \Gamma, \hypeq{\alpha}{t}
              \,\bnfalt  \Gamma, \MonnierComma{u}
      \\[1pt]
      \text{Complete contexts}\hspace{-5ex}     & \Omega & \bnfas &
                  \cdot
                  \bnfalt    \Omega, \alpha:\sort
                  \bnfalt    \Omega, \hyp{p}{x}{A}
                  \bnfalt    \Omega, \hypeq{\ahat:\sort}{\tau} 
                  \bnfalt    \Omega, \hypeq{\alpha}{t}
                  \bnfalt    \Omega, \MonnierComma{u}
   \vspace{-5pt}
   \end{bnfarray}

   \smallskip

   \begin{bnfarray}
      \text{Possibly inconsistent contexts} & \Deltabot & \bnfas &
                 \Delta
                 \bnfalt \bot
   \vspace*{-0.3ex}
   \end{bnfarray}
   
   \caption{Syntax of types, contexts, and other objects in the algorithmic system}
   \FLabel{fig:alg-syntax}
\end{figure}

\mypara{Existential variables}
The algorithmic system adds existential variables $\ahat$, $\bhat$, $\chat$
to types and terms/monotypes (\Figureref{fig:alg-syntax}).
We use the same meta-variables $A$, $\dots$.   %
We write $u$ for either a universal variable $\alpha$
or an existential variable $\ahat$.

\mypara{Contexts}
An algorithmic context $\Gamma$ is a sequence that,
like a declarative context, may
contain universal variable declarations $\alpha : \sort$ and expression variable
typings $\hyp p x A$.  However, it may also have
(1) \emph{unsolved} existential variable declarations
$\ahat : \sort$ (included in the $\Gamma, u : \sort$ production);
(2) \emph{solved} existential variable declarations
$\hypeq{\ahat:\sort}{\tau}$;
(3) \emph{equations} over universal variables $\hypeq{\alpha}{\tau}$;
and
(4)
\emph{markers} $\MonnierComma{u}$.  
An equation $\hypeq{\alpha}{\tau}$ must appear to the right
of the universal variable's declaration $\alpha : \sort$.
We use markers as delimiters within contexts.  For example, rule 
\ImpliesIntro adds $\MonnierComma{P}$, which tells it how much
of its last premise's output context ($\Delta, \MonnierComma{P}, \Delta'$)
should be dropped.
(We abuse notation by writing $\MonnierComma{P}$ rather than
cluttering the context with a dummy $\alpha$ and writing
$\MonnierComma{\alpha}$.)

A complete algorithmic context, denoted by $\Omega$, is
an algorithmic context with no unsolved existential variable
declarations.

Assuming an equality can yield inconsistency: for example,
$\zero = \succ{\zero}$.  We write $\Deltabot$ for either a valid
algorithmic context $\Delta$ or inconsistency $\bot$.

\subsection{Context Substitution $[\Gamma]A$
  and Hole Notation $\Gamma[\Theta]$}

\begin{figure}[t]

  \begin{minipage}{0.45\linewidth}
  \begin{array}[t]{@{}l@{~~}c@{~~}ll@{}}
      ~\\[-9pt]
      \multicolumn{3}{l}{
      [\Gamma] \alpha
               ~~ = ~~   \left\{\begin{array}{@{}l@{~}l@{}}
                                [\Gamma]\tau  & \mbox{when }(\hypeq{\alpha}{\tau}) \in \Gamma \\[1pt]
                                \alpha        & \mbox{otherwise}
                              \end{array}
                       \right.
      }
     \\[8pt]
      [\Gamma](P \implies A)   & = &
          ([\Gamma]P) \implies ([\Gamma]A) &
      \\[0pt]
      [\Gamma](A \with P)   & = &
          ([\Gamma]A) \with ([\Gamma]P) &
      \\[0pt]
      [\Gamma](A \binc B)   & = &
          ([\Gamma]A) \binc ([\Gamma]B) &
      \\[0pt]
      [\Gamma](\vectype{t}{A})   & = &
          \vectype{([\Gamma]t)}{([\Gamma]A)} &
  \end{array}
  \end{minipage}
  ~
  \begin{minipage}{0.45\linewidth}
  \begin{array}[t]{@{}l@{~~}c@{~~}ll@{}}
      \!\big[\Gamma[\hypeq{\ahat:\sort}{\tau}]\big] \ahat
               & = &   [\Gamma]\tau 
      \\[2pt]
      \!\big[\Gamma[\ahat:\sort]\big]\ahat   & = &   \ahat &
      \\[8pt]
      [\Gamma](\alltype{\alpha:\sort} A)
         & = & 
         \alltype{\alpha:\sort} [\Gamma]A & 
      \\[1pt]{}
      [\Gamma](\extype{\alpha:\sort} A)
         & = & 
         \extype{\alpha:\sort} [\Gamma]A & 
      \\[3pt]
      [\Gamma](t_1 = t_2) & = & ([\Gamma]t_1) = ([\Gamma]t_2)
  \end{array}
  \end{minipage}

  \vspace*{-0.3ex}

  \caption{Applying a context, as a substitution, to a type}
  \FLabel{fig:substitution}
\end{figure}

An algorithmic context can be viewed as a substitution for its solved
existential variables.  For example,
$\hypeq{\ahat}{\unitty}, \hypeq{\bhat}{\ahat{\arr}\unitty}$ 
can be applied as if it were the substitution
$\unitty/\ahat, (\ahat{\arr}\unitty) / \bhat$
(applied right to left), or the simultaneous substitution
$\unitty/\ahat, (\unitty{\arr}\unitty) / \bhat$.
We write $[\Gamma]A$ for $\Gamma$ applied as a substitution
(\Figureref{fig:substitution}).

Applying a complete context to a type
$A$ (provided it is well-formed: $\judgetp{\Omega}{\!A}$)
yields a type $[\Omega]A$ with no existentials.  
Such a type is well-formed under the \emph{declarative} context
obtained by dropping all the existential declarations and applying
$\Omega$ to declarations $x : A$ (to yield $x : [\Omega]A$).
We can think of this context as the result of applying $\Omega$ to itself:
$[\Omega]\Omega$.
More generally, we can apply $\Omega$ to any
context $\Gamma$ that it extends:
context application $[\Omega]\Gamma$
is given in \Figureref{fig:context-substitution}.
The application $[\Omega]\Gamma$ is defined if and only if
$\substextend{\Gamma}{\Omega}$
(context extension; see \Sectionref{sec:context-extension}),
and applying $\Omega$ to any such $\Gamma$
yields the same declarative context $[\Omega]\Omega$.

\begin{figure}
\begin{displ}
    \begin{array}[t]{l@{~~}c@{~~}l@{~~}l}
     [\cdot]\cdot & = &   \cdot &              %
    \\[1pt]{}
     [\Omega, \hyp{p}{x}{A}](\Gamma, \hyp{p}{x}{A_\Gamma}) & = &   [\Omega]{\Gamma},\; \hyp{p}{x}{[\Omega]A} & 
       \mbox{if } [\Omega]A = [\Omega]A_\Gamma
    \\[1pt]{}
      [\Omega, \alpha:\sort](\Gamma, \alpha:\sort) & = &   [\Omega]\Gamma,\; \alpha:\sort &
    \\[1pt]{}
    [\Omega, \MonnierComma{u}](\Gamma, \MonnierComma{u}) & = &   [\Omega]\Gamma &
    \\[1pt]{}
    [\Omega, \hypeq{\alpha}{t}](\Gamma, \hypeq{\alpha}{t'}) & = &  \big[[\Omega]t/\alpha\big] [\Omega] \Gamma & 
    \mbox{if } [\Omega]t = [\Omega]t'
    \\[3pt]
    [\Omega, \hypeq{\ahat:\sort}{t}]\Gamma & = &  
    \multicolumn{2}{l}{%
      \!\!\!\left\{
        \begin{array}{@{}l@{~}l}
          [\Omega]\Gamma' & \text{when~} \Gamma = (\Gamma', \hypeq{\ahat:\sort}{t'}) \\ {}
          [\Omega]\Gamma' & \text{when~} \Gamma = (\Gamma', \ahat:\sort) \\ {}
          [\Omega]\Gamma  & \text{otherwise}
        \end{array}
      \right.
    }
    \end{array}
\end{displ}
\vspace{-1ex}
\caption{Applying a complete context $\Omega$ to a context}
\FLabel{fig:context-substitution}
\end{figure}

In addition to appending declarations (as in the declarative system),
we sometimes insert and replace declarations, so
a notation for contexts with a hole is useful:
$\Gamma = \Gamma_0[\Theta]$ means $\Gamma$ has the form $(\Gamma_L, \Theta, \Gamma_R)$.
For example, if $\Gamma = \Gamma_0[\bhat] = (\ahat, \bhat, x : \bhat)$,
then $\Gamma_0[\hypeq{\bhat}{\ahat}] = (\ahat, \hypeq{\bhat}{\ahat}, x : \bhat)$.

We also use contexts with \emph{two} ordered holes:
if $\Gamma = \Gamma_0[\Theta_1][\Theta_2]$
then
$\Gamma = (\Gamma_L, \Theta_1, \Gamma_M, \Theta_2, \Gamma_R)$.

\subsection{The Context Extension Relation $\substextend{\Gamma}{\Delta}$}
\Label{sec:context-extension}

\begin{figure*}[htbp]
\raggedright
\ifnum\OPTIONInAppendix=1
\runonfontsz{8.7pt}
\fi
  \judgbox{
    \arrayenvcl{
    \chkjudg{p}{\Gamma}{e}{A}{\Delta}
    \\
    \synjudg{p}{\Gamma}{e}{A}{\Delta}
  }}%
     {Under input context $\Gamma$, expression $e$ checks against input type $A$, \\
       with output context $\Delta$ \\[2pt]
      Under input context $\Gamma$, expression $e$ synthesizes output type $A$, \\
       with output context $\Delta$}%
  \begin{mathpar}
     \Infer{\!\Var}
          {(\hyp{p}{x}{A}) \in \Gamma}
          {\synjudg{p}{\Gamma}{x}{[\Gamma]A}{\Gamma}}
     \and
     \Infer{\Sub}
          {
              \synjudg{q}{\Gamma}{e}{A}{\Theta}
              \\
              \subjudg[\joinpolarity{\polarity{B}}{\polarity{A}}]{\Theta}{A}{B}{\Delta}
          }
          {\chkjudg{p}{\Gamma}{e}{B}{\Delta}}
     \\
     \Infer{\Anno}
          {
              \judgetp{\Gamma}{A\,\p}
              \\
              \chkjudg{\p}{\Gamma}{e}{[\Gamma]A}{\Delta}
          }
          {\synjudg{\p}{\Gamma}{(e : A)}{[\Delta]A}{\Delta}}
    \and
     \Infer{\!\Rec}
          {
            \chkjudg{p}{\Gamma, \hyp{p}{x}{A}}{v}{A}{\Delta, \hyp{p}{x}{A}, \Theta}
          }
          { \chkjudg{p}{\Gamma}{\rec{x} v}{A}{\Delta} }
       \vspace{-0.8em}
     \\
     \Infer{\UnitIntro}
           {}
           {\chkjudg{p}{\Gamma}{\unitexp}{\unitty}{\Gamma}}
     \and
     \Infer{\UnitIntroSolve}
            { }
            {\chkjudg{\ok}{\Gamma[\ahat:\type]}{\unitexp}{\ahat}{\Gamma[\hypeq{\ahat:\type}{\unitty}]}}
     \\
     \Infer{\AllIntro}
           {\chkintro{v}   %
            \\
            \chkjudg{p}{\Gamma, \alpha:\sort}{v}{A}{\Delta, \alpha:\sort, \Theta}
           }
           {\chkjudg{p}{\Gamma}{v}{\alltype{\alpha:\sort} A}{\Delta}}
     ~~~~~~
     \Infer{\ExIntro}
           {\chkintro{e} \\     %
            \chkjudg{\ok}{\Gamma, \ahat:\sort}{e}{[\ahat/\alpha]A}{\Delta}
           }
           {\chkjudg{p}{\Gamma}{e}{\extype{\alpha:\sort} A}{\Delta}}
     \and
     \Infer{\ImpliesIntro}
           {
             \arrayenvbl{
               \chkintro{v}
               \\
               \elimprop{\Gamma, \MonnierComma{P}}{P}{\Theta}
               \\
               \chkjudg{\p}{\Theta}{v}{[\Theta]A}{\Delta, \MonnierComma{P}, \Delta'}
             }
            }
            {\chkjudg{\p}{\Gamma}{v}{P \implies A}{\Delta}}
     ~~~
     \Infer{\ImpliesIntroBot}
           {
              \chkintro{v}  \\     %
              \elimprop{\Gamma, \MonnierComma{P}}{P}{\bot}
            }
            {\chkjudg{\p}{\Gamma}{v}{P \implies A}{\Gamma}}
     ~~~
     \Infer{\WithIntro}
           {
            \arrayenvbl{
              \notcase{e} \\      %
              \checkprop{\Gamma}{P}{\Theta}
              \\
              \chkjudg{p}{\Theta}{e}{[\Theta]A}{\Delta}
            }
           }
           {\chkjudg{p}{\Gamma}{e}{A \with P}{\Delta}}
\\
     \Infer{\!\ArrIntro}
          {\chkjudg{p}{\Gamma, \hyp{p}{x}{A}}{e}{B}{\Delta, \hyp{p}{x}{A}, \Theta}
          }
          {\chkjudg{p}{\Gamma}{\lam{x} e}{A \arr B}{\Delta}}
   ~~
  \mathsz{9pt}{
     \Infer{\!\ArrIntroSolve}
           {\chkjudg{\ok}{%
                 \Gamma[\ahat_1{:}\type, \ahat_2{:}\type, \hypeq{\ahat{:}\type}{\ahat_1{\arr}\ahat_2}], \hyp{\ok}{x}{\ahat_1}}
              {e}
              {\ahat_2}
              {\Delta, \hyp{\ok}{x}{\ahat_1}, \Delta'}
           }
           {\chkjudg{\ok}{\Gamma[\ahat : \type]}{\lam{x} e}{\ahat}{\Delta}}
  }
  \\
   \Infer{\ArrElim}
           {
             \arrayenvbl{
               \synjudg{p}{\Gamma}{e}{A}{\Theta}
               \\
               \recspinejudg{\Theta}{s}{A}{p}{C}{q}{\Delta}
             }
           }
           {\synjudg{q}{\Gamma}{e\,s}{C}{\Delta}}
   \and
      \Infer{\!\Case}
           {
             \synjudg{q}{\Gamma}{e}{A}{\Theta}
             \\
             \arrayenvbl{
                \matchjudg{q}{p}{\Theta}{\Pi}{[\Theta]A}{[\Theta]C}{\Delta}
                \\
                \covers{q}{\Delta}{\Pi}{[\Delta]A}
             }
           }
           {\chkjudg{p}{\Gamma}{\case{e}{\Pi}}{C}{\Delta}}
\\
\ifnum\OPTIONInAppendix=1%
     \Infer{\SumIntro{k}}
           {\chkjudg{p}{\Gamma}{e}{A_k}{\Delta}}
           {\chkjudg{p}{\Gamma}{\inj{k} e}{A_1 + A_2}{\Delta}}
     \and
     \Infer{\SumIntroSolve{k}}
            {\chkjudg{\ok}{\Gamma[\ahat_1 : \type, \ahat_2 : \type, \hypeq{\ahat:\type}{\ahat_1{+}\ahat_2}]}{e}{\ahat_k}{\Delta}}
            {\chkjudg{\ok}{\Gamma[\ahat : \type]}{\inj{k} e}{\ahat}{\Delta}}
      \\
     \Infer{\PairIntro}
           {\chkjudg{p}{\Gamma}{e_1}{A_1}{\Theta}
             \\
             \chkjudg{p}{\Theta}{e_2}{[\Theta]A_2}{\Delta}
           }
           {\chkjudg{p}{\Gamma}{\pair{e_1}{e_2}}{A_1 \times A_2}{\Delta}}
~~~~~
     \Infer{\PairIntroSolve}
           {\arrayenvbl{
               \chkjudg{\!\!\ok}{\Gamma[\ahat_2{:}\type, \ahat_1{:}\type, \hypeq{\ahat{:}\type}{\ahat_1{\times}\ahat_2}]}{e_1}{\ahat_1}{\Theta}
               \\
               \chkjudg{\!\!\ok}{\Theta}{e_2}{[\Theta]\ahat_2}{\Delta}
             }
           }
           {\chkjudg{\ok}{\Gamma[\ahat : \type]}{\pair{e_1}{e_2}}{\ahat}{\Delta}}
      \\
      \Infer{\Nil}
            {\checkprop{\Gamma}{t = \zero}{\Delta}}
            {\chkjudg{p}{\Gamma}{\vecnil}{(\vectype{t}{A})}{\Delta}}
     ~~~~
      \Infer{\Cons}
            {
              \arrayenvbl{
              \checkprop{\Gamma,
                        \MonnierComma{\ahat},
                        \ahat : \kindnat}
                       {t = \succ{\ahat}}
                       {\Gamma'}
              \\
              \chkjudg{p}{\Gamma'}{e_1}{[\Gamma']A}{\Theta}
              \\
              \chkjudg{\OK}
                      {\Theta}
                      {e_2}
                      {[\Theta](\vectype{\ahat}{A})}
                      {\Delta, \MonnierComma{\ahat}, \Delta'}
              }
            }
            {
              \chkjudg{p}{\Gamma}{\veccons{e_1}{e_2}}{(\vectype{t}{A})}{\Delta}
            }
\fi  %
\end{mathpar}

  \judgbox{\arrayenvcl{
      \spinejudg{\Gamma}{s}{A}{p}{C}{q}{\Delta}
      \\
          \recspinejudg{\Gamma}{s}{A}{p}{C}{q}{\Delta}
     }}%
     {Under input context $\Gamma$, \\
       passing spine $s$ to a function of type $A$ synthesizes type $C$; \\
       in the $\symrecspine{q}$ form, recover principality in $q$ if possible}
\begin{mathpar}
     \Infer{\!\AllSpine}
         {\spinejudg{\Gamma, \ahat:\sort}{\appspine{e}{s}}{[\ahat/\alpha]A}{\ok}{C}{q}{\Delta}}
         {\spinejudg{\Gamma}{\appspine{e}{s}}{\alltype{\alpha:\sort} A}{p}{C}{q}{\Delta}}
    ~~
     \Infer{\!\ImpliesSpine}
         {\checkprop{\Gamma}{P}{\Theta}
          \\
          \spinejudg{\Theta}{\appspine{e}{s}}{[\Theta]A}{p}{C}{q}{\Delta}}
         {\spinejudg{\Gamma}{\appspine{e}{s}}{P \implies A}{p}{C}{q}{\Delta}}
  \\
     \Infer{\!\EmptySpine}
         {}
         {\spinejudg{\Gamma}{\emptyspine}{A}{p}{A}{p}{\Gamma}}
     ~~~~~
     \Infer{\!\ArrSpine}
         {
             \chkjudg{p}{\Gamma}{e}{A}{\Theta}
             \\
             \spinejudg{\Theta}{s}{[\Theta]B}{p}{C}{q}{\Delta}
s%
         }
         {\spinejudg{\Gamma}{\appspine{e}{s}}{A \arr B}{p}{C}{q}{\Delta}}
\ifnum\OPTIONInAppendix=0%
\fi
  \\
\def\CompactJudgments{0}
      \Infer{\!\SolveSpine}
            {
              \spinejudg{\Gamma[\ahat_2{:}\type, \ahat_1{:}\type, \hypeq{\ahat{:}\type}{\ahat_1{\arr}\ahat_2}]}{\appspine{e}{s}}{(\ahat_1\,{\arr}\,\ahat_2)}{\ok}{C}{\ok}{\Delta}
            }
            {\spinejudg{\Gamma[\ahat:\type]}{\appspine{e}{s}}{\ahat}{\ok}{C}{\ok}{\Delta}}
     \\
     \Infer{\Recover}
           {
             \arrayenvbl{
                 \spinejudg{\Gamma}{s}{A}{\p}{C}{\OK}{\Delta}
                 \\
                 \FEV{C} = \emptyset
             }
           }
           {\recspinejudg{\Gamma}{s}{A}{\p}{C}{\p}{\Delta}}
    ~~~~~
     \Infer{\Pass}
           {
             \arrayenvbl{
                 \spinejudg{\Gamma}{s}{A}{p}{C}{q}{\Delta}
                 \\
                 \big(
                     (p = \OK)
                     \OR (q = \p)
                     \OR (\FEV{C} \neq \emptyset)
                 \big)
             }
           }
           {\recspinejudg{\Gamma}{s}{A}{p}{C}{q}{\Delta}}
  \vspace*{-1.3ex}
  \end{mathpar}

\ifnum\OPTIONInAppendix=0
  \caption{Algorithmic typing, omitting rules for $\times$, $+$, and $\vecop$}
  \FLabel{fig:alg-typing}
\else
  \repeatcaption{mainfig:alg-typing}{a}{Algorithmic typing, including rules omitted from main paper}
\fi

\end{figure*}

A context $\Gamma$ \emph{is extended by} a context $\Delta$,
written $\substextend{\Gamma}{\Delta}$, if $\Delta$ has at least
as much information as $\Gamma$, while conforming to the same
declarative context---that is, $[\Omega]\Gamma = [\Omega]\Delta$
for some $\Omega$.
In a sense, $\substextend{\Gamma}{\Delta}$ says
that $\Gamma$ is \emph{entailed by} $\Delta$: all positive information derivable
from $\Gamma$ %
can also be derived from $\Delta$ (which may have more information,
say, that $\ahat$ is equal to a particular type). We give the rules for
extension in \Figureref{fig:substextend}.

\begin{figure}[t]
\raggedright

  \loudjudgbox{substextend}{\substextend{\Gamma}{\Delta}}{%
          $\Gamma$ is extended by $\Delta$%
  }
  \vspace*{-1ex}
  \begin{mathpar}
    \Infer{\substextendId}
        { }
        { \substextend{\emptyctx}{\emptyctx} }
    ~~~~
    \Infer{\substextendVV}
          {\substextend{\Gamma}{\Delta}
           \\
           [\Delta]A = [\Delta]A'
          }
          {\substextend{\Gamma, \hyp{p}{x}{A}}{\Delta, \hyp{p}{x}{A'}}}
    ~~~~
    \Infer{\substextendUU}
        { \substextend{\Gamma}{\Delta} }
        { \substextend{\Gamma, \alpha:\sort}{\Delta, \alpha:\sort} }
\ifnum\OPTIONInAppendix=0
    \and
\else
   \and
\fi
    \Infer{\!\substextendEqn}
        { \substextend{\Gamma}{\Delta}
          \\
          [\Delta]t = [\Delta]t'
        }
        { \substextend{\Gamma, \hypeq{\alpha}{t}}{\Delta, \hypeq{\alpha}{t'}}
        }
    ~~~
    \Infer{\!\substextendEE}
        { \substextend{\Gamma}{\Delta} }
        { \substextend{\Gamma, \ahat:\sort}{\Delta, \ahat:\sort} }
    ~~~
    \Infer{\!\substextendMonMon}
        { \substextend{\Gamma}{\Delta} }
        { \substextend{\Gamma, \MonnierComma{u}}{\Delta, \MonnierComma{u}} }
    \and
    \Infer{\substextendSolved}  %
        { \substextend{\Gamma}{\Delta}
          \\
          [\Delta]t = [\Delta]t'
        }
        { \substextend{\Gamma, \hypeq{\ahat:\sort}{t}}{\Delta, \hypeq{\ahat:\sort}{t'}}
        }
    \and
    \Infer{\substextendSolve}
        { \substextend{\Gamma}{\Delta} }
        { \substextend{\Gamma, \bhat:\sort'}{\Delta, \hypeq{\bhat:\sort'}{t}} }
   \and
    \Infer{\substextendAdd}
        { \substextend{\Gamma}{\Delta} }
        { \substextend{\Gamma}{\Delta, \ahat:\sort} }
    \and
    \Infer{\substextendAddSolved}
        { \substextend{\Gamma}{\Delta} }
        { \substextend{\Gamma}{\Delta, \hypeq{\ahat:\sort}{t}} }
  \vspace*{-1.3ex}
  \end{mathpar}
  
  \caption{Context extension}
  \FLabel{fig:substextend}
\label{figLAST}   %
\end{figure}

The rules deriving the context extension judgment (\Figureref{fig:substextend}) say that
the empty context extends the empty context (\substextendId);
a term variable typing with $A'$ extends one with $A$ if applying
the extending context $\Delta$ to $A$ and $A'$ yields the same type (\substextendVV);
universal variable declarations and equations must match (\substextendUU, \substextendEqn);
scope markers must match (\substextendMonMon);
and, existential variables may either match (\substextendEE, \substextendSolved),
get solved by the extending context (\substextendSolve),
or be added by the extending context (\substextendAdd, \substextendAddSolved).

Extension may change solutions, if information is preserved or increased:
$
   \substextend{(\ahat:\type, \hypeq{\bhat:\type}{\ahat})}{(\hypeq{\ahat:\type}{\unitty}, \hypeq{\bhat:\type}{\ahat})}
$
directly increases information about $\ahat$, and indirectly increases information
about $\bhat$.  More interestingly,
if $\Delta = (\hypeq{\ahat{:}\type}{\unitty}, \hypeq{\bhat{:}\type}{\ahat})$
and
$\Omega = (\hypeq{\ahat{:}\type}{\unitty}, \hypeq{\bhat{:}\type}{\unitty})$,
then $\substextend{\Delta}{\Omega}$:
while the solution of $\bhat$ in $\Omega$ is different, in the sense that
$\Omega$ contains $\hypeq{\bhat:\type}{\unitty}$
while $\Delta$ contains $\hypeq{\bhat:\type}{\ahat}$,
applying $\Omega$ to the solutions gives the same result:
$[\Omega]\ahat = [\Omega]\unitty = \unitty$, 
the same as $[\Omega]\unitty = \unitty$.

Extension is quite rigid, however, in two senses.  First, if a declaration
appears in $\Gamma$, it appears in all extensions of $\Gamma$.  Second,
\emph{extension preserves order}.  For example,
if $\bhat$ is declared after $\ahat$ in $\Gamma$, then $\bhat$ will also be declared after
$\ahat$ in every extension of $\Gamma$.  This holds for every variety of declaration,
including equations of universal variables.
This rigidity aids in enforcing type variable scoping and dependencies, which are
nontrivial in a setting with higher-rank polymorphism.

\subsection{Determinacy}
\Label{sec:alg-determinacy}

Given appropriate inputs ($\Gamma$, $e$, $A$, $p$)
to the algorithmic judgments,
only one set of outputs ($C$, $q$, $\Delta$)
is derivable
(\Theoremref{fancy:thm:typing-det} in the supplementary material, p.~\pageref{fancy:thm:typing-det}).
We use this property (for spine judgments) in the proof of soundness.

\section{Soundness}
\Label{sec:soundness}

We show that the algorithmic system is sound with respect to the declarative system.
Soundness for the mutually recursive judgments depends on lemmas for the auxiliary
judgments (instantiation, equality elimination, checkprop, algorithmic subtyping and
match coverage), which are in \mbox{\Appendixref{fancy:apx:soundness}} for space reasons.
The main soundness result has mutually recursive parts for
checking, synthesis, spines and matching---including the
principality-recovering spine judgment.

\setcounter{theorem}{7}
\begin{theorem}[Soundness of Algorithmic Typing]  %
Given $\substextend{\Delta}{\Omega}$: \\[-2.5ex]
    \begin{enumerate}[(i)]
      \item %
          If $\chkjudg{p}{\Gamma}{e}{A}{\Delta}$
          and $\judgetp{\Gamma}{A\;p}$
          then
          $\declchkjudg{p}{[\Omega]\Delta}{[\Omega]e}{[\Omega]A}$. 
      
      \item %
          If $\synjudg{p}{\Gamma}{e}{A}{\Delta}$ 
          then
          $\declsynjudg{p}{[\Omega]\Delta}{[\Omega]e}{[\Omega]A}$. 

      \item %
          If $\spinejudg{\Gamma}{s}{A}{p}{B}{q}{\Delta}$
          and $\judgetp{\Gamma}{A\;p}$
          then 
          $\declspinejudg{[\Omega]\Delta}{[\Omega]s}{[\Omega]A}{p}{[\Omega]B}{q}$.

      \item %
          If $\recspinejudg{\Gamma}{s}{A}{p}{B}{q}{\Delta}$
          and $\judgetp{\Gamma}{A\;p}$
          then 
          $\declrecspinejudg{[\Omega]\Delta}{[\Omega]s}{[\Omega]A}{p}{[\Omega]B}{q}$.

      \item %
          If $\matchjudg{q}{p}{\Gamma}{\Pi}{\Avec}{C}{\Delta}$
          and $\judgetpvec{\Gamma}{\Avec\;q}$
          and $[\Gamma]\Avec = \Avec$
          and $\judgetp{\Gamma}{C\;p}$
          \\
          then
          $\declmatchjudg[q]{p}{[\Omega]\Delta}{[\Omega]\Pi}{[\Omega]\Avec}{[\Omega]C}$. 

      \item %
          If $\matchelimjudg{p}{\Gamma}{\Pi}{P}{\Avec}{C}{\Delta}$
          and $\judgeprop{\Gamma}{P}$
          and $\FEV{P} = \emptyset$
          and $[\Gamma]P = P$
          \\
          and $\judgetpvec{\Gamma}{\Avec\;\p}$
          and $\judgetp{\Gamma}{C\;p}$
          then
          $\declmatchelimjudg{p}{[\Omega]\Delta}{[\Omega]\Pi}{[\Omega]P}{[\Omega]\Avec}{[\Omega]C}$. 
    \end{enumerate}
\end{theorem}

Much of this proof ``turns the crank'': apply the induction hypothesis
to each premise, yielding derivations of corresponding declarative judgments
(with $\Omega$ applied everywhere), then apply the corresponding
declarative rule; for example, in the \Sub case we finish by applying \DeclSub.
However, in the \Recover case we finish by applying \DeclRecover, but
since \DeclRecover contains a premise that quantifies over all declarative derivations
of a certain form, we must appeal to completeness!  Consequently, soundness
and completeness are really one theorem.

These parts are mutually
recursive---later, we'll see that the \DeclRecover case of completeness
must appeal to soundness (to show that the algorithmic type has no free
existential variables).  We cannot induct on the given derivation alone,
because the derivations in the ``for all'' part of \DeclRecover
are not subderivations.  So we need a more involved induction measure
that can make the leaps between soundness and completeness:
lexicographic order with (1) the size of the subject term,
(2) the judgment form, with ordinary spine judgments considered
smaller than recovering spine judgments, and (3) the height of the derivation:
\vspace*{-0.1ex}
\[
      \left\langle
        e / s / \Pi,~~
        {\!\begin{array}[c]{c}
            \text{ordinary spine judgment}
          \\[-0.13ex]
          <
          \\[-0.06ex]
            \text{recovering spine judgment}
        \end{array}%
        \!},
        ~~
        \text{height}(\Dee)
      \right\rangle
\]
\mypara{Proof sketch---\Recover case}
By i.h., $\declspinejudg{[\Omega]\Gamma}{[\Omega]s}{[\Omega]A}{\p}{[\Omega]C}{q}$.
Our goal is to apply \DeclRecover, which requires that we show
that for \emph{all} $C'$ such that $\declspinejudg{[\Omega]\Theta}{s}{[\Omega]A}{\p}{C'}{\OK}$,
we have $C' = [\Omega]C$.  Suppose we have such a $C'$.
By completeness (Theorem \ref{fancy:thm:typing-completeness}),
$\spinejudg{\Gamma}{s}{[\Gamma]A}{\p}{C''}{q}{\Delta''}$
where $\substextend{\Delta''}{\Omega''}$.
We already have (as a subderivation) $\spinejudg{\Gamma}{s}{A}{\p}{C}{\OK}{\Delta}$,
so by determinacy, $C'' = C$ and $q = \OK$ and $\Delta'' = \Delta$.
With the help of lemmas about context application, we can show
$C' = [\Omega'']C'' = [\Omega'']C = [\Omega]C$.
(Using completeness is permitted since our
measure says a non-principality-restoring judgment is smaller.)

\subsection{Auxiliary Soundness}
For several auxiliary judgment forms, soundness is a matter of showing that,
given two algorithmic terms, their declarative versions are equal.  For example,
for the instantiation judgment we have:

\begin{lemma*}[Soundness of Instantiation]
~\\
    If $\instjudg{\Gamma}{\ahat}{\tau}{\sort}{\Delta}$
    and $\ahat \notin \FV{[\Gamma]\tau}$
    and $[\Gamma]\tau = \tau$
    and $\substextend{\Delta}{\Omega}$
    then
    $[\Omega]\ahat = [\Omega]\tau$.
\end{lemma*}

We have similar lemmas for
term equality ($\checkeq{\Gamma}{\sigma}{t}{\sort}{\Delta}$),
propositional equivalence ($\propequivjudg{\Gamma}{P}{Q}{\Delta}$)
and type equivalence ($\equivjudg{\Gamma}{A}{B}{\Delta}$).

Our eliminating judgments incorporate assumptions into the
context $\Gamma$.  We show that the algorithmic rules for
these judgments just append equations over universal variables:

\begin{lemma*}[Soundness of Equality Elimination]
  If
  $[\Gamma]\sigma = \sigma$
  and $[\Gamma]t = t$
  and $\judge{\Gamma}{\sigma}{\sort}$
  and $\judge{\Gamma}{t}{\sort}$
  and $\FEV{\sigma} \union \FEV{t} = \emptyset$,
  then:

  \begin{enumerate}[(1)]
    \item 
        If $\elimeq{\Gamma}{\sigma}{t}{\sort}{\Delta}$
        then
        $\Delta = (\Gamma, \Theta)$
        where $\Theta = (\hypeq{\alpha_1}{t_1}, \dots, \hypeq{\alpha_n}{t_n})$
        and
        for all $\Omega$ such that $\substextend{\Gamma}{\Omega}$
        and all $t'$ s.t.\ 
        $\judge{\Omega}{t'}{\sort'}$
            we have
            $[\Omega, \Theta]t' = [\theta][\Omega]t'$
            where $\theta = \mgu{\sigma}{t}$.

    \item  
        If $\elimeq{\Gamma}{\sigma}{t}{\sort}{\bot}$
        then %
        no most general unifier exists.
  \end{enumerate}
\end{lemma*}

The last lemmas for soundness move directly from an algorithmic judgment
to the corresponding declarative judgment.

\begin{lemma*}[Soundness of Checkprop]
  If $\checkprop{\Gamma}{P}{\Delta}$ and $\substextend{\Delta}{\Omega}$
  then
  $\declcheckprop{\Psi}{[\Omega]P}$.
\end{lemma*}

\begin{lemma*}[Soundness of Match Coverage]
~
\begin{enumerate}
\item If $\covers{q}{\Gamma}{\Pi}{\Avec}$
    and $\substextend{\Gamma}{\Omega}$
    and $\judgetpvec{\Gamma}{\Avec\;\p}$
    and $[\Gamma]\Avec = \Avec$
    then 
    $\covers{q}{[\Omega]\Gamma}{\Pi}{\Avec}$. 
\item If $\coverseq{\Gamma}{P}{\Pi}{\Avec}$
    and $\substextend{\Gamma}{\Omega}$
    and $\judgetpvec{\Gamma}{\Avec\;\p}$
    and $[\Gamma]\Avec = \Avec$
    and $[\Gamma]P = P$
    then 
    $\coverseq{[\Omega]\Gamma}{P}{\Pi}{\Avec}$. 
\end{enumerate}
\end{lemma*}

\begin{restatable}[Soundness of Algorithmic Subtyping]{theorem}{subtypingsoundness} 
\Label{thm:subtyping-soundness}
  If
  $[\Gamma]A = A$
  and $[\Gamma]B = B$
  and $\judgetp{\Gamma}{A}$
  and $\judgetp{\Gamma}{B}$
  and $\substextend{\Delta}{\Omega}$
  and $\subjudg[\polvar]{\Gamma}{A}{B}{\Delta}$ then 
    $\declsubjudg[\polvar]{[\Omega]\Delta}{[\Omega]A}{[\Omega]B}$.
\end{restatable}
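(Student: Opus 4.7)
\proofheading
The plan is to proceed by induction on the derivation $\mathcal{D} \derives \subjudg[\pm]{\Gamma}{A}{B}{\Delta}$, with cases on the last rule used. In each case I apply the inductive hypothesis to the premises and then invoke the corresponding declarative subtyping rule, using the hypothesis $\substextend{\Delta}{\Omega}$ together with \Lemmaref{lem:extension-transitivity} to supply the appropriate completions for subderivations. The well-formedness inputs $\judgetp{\Gamma}{A}$ and $\judgetp{\Gamma}{B}$ are preserved across premises by \Lemmaref{lem:extension-weakening-tp}, and the substitution-stability inputs $[\Gamma]A = A$, $[\Gamma]B = B$ are restored on premises using \Lemmaref{lem:substitution-tp} (applying $[-]$ again is idempotent by \Lemmaref{lem:substitution-monotonicity}).

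The rule \SubEquiv is the base case: \Theoremref{lem:equiv-soundness} applied to its premise yields $[\Omega]A = [\Omega]B$, and then \Lemmaref{lem:declarative-reflexivity} gives $\declsubjudg[\pm]{[\Omega]\Delta}{[\Omega]A}{[\Omega]B}$. For the polarity-flip rules \SubPosNegL, \SubPosNegR, \SubNegPosL, \SubNegPosR, the side conditions $\nonPos{\cdot}$, $\nonNeg{\cdot}$ on the algorithmic premise transfer unchanged to $[\Omega]A$ and $[\Omega]B$ (polarity is determined by the head connective, which substitution preserves), so the corresponding declarative polarity-flip rule applies after the IH. For the connective rules \SubArr, \SubUnit, and \SubExvar, the result follows by applying the IH to each premise, threading the intermediate output context into $\Omega$ by extension transitivity, and concluding with the matching declarative rule.

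The quantifier rules \SubAllL, \SubAllR, \SubExistsL, \SubExistsR are the main obstacle. For \SubAllL, the algorithmic rule introduces a fresh $\ahat$ and checks $\subjudg[-]{\Gamma, \MonnierComma{\ahat}, \ahat : \sort}{[\ahat/\alpha]A_0}{B}{\Delta, \MonnierComma{\ahat}, \Theta}$. By \Lemmaref{lem:soln-completes} we extend $\Omega$ to a completion of $(\Delta, \MonnierComma{\ahat}, \Theta)$, apply the IH, and then truncate via \Lemmaref{lem:context-partitioning} (the $\MonnierComma{\ahat}$ marker guarantees the left part of the completion matches $\Omega$ on $\Delta$). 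The witness for the declarative \DsubAllL rule is the monotype $[\Omega']\ahat$, where $\Omega'$ is the extended completion; \Lemmaref{lem:bundled-subst-sort} ensures this witness is well-sorted in $[\Omega]\Delta$, and a compatibility of substitution argument shows that instantiating $[\Omega]A$ at $\alpha$ with $[\Omega']\ahat$ matches the IH's conclusion. The rules \SubAllR, \SubExistsL, and \SubExistsR are symmetric but easier, since they introduce a universal (resp.\ existential) into the context rather than a witness, and \Lemmaref{lem:extension-inversion} keeps the domain manageable.

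The implication and with rules \SubImpliesL, \SubImpliesR, \SubWithL, \SubWithR require combining the IH on the subtyping premise with soundness for the proposition check or elimination. For \SubImpliesR and \SubWithL (adding a proposition), elimination of $P$ against $\Delta$ produces a $\Theta$ such that $\substextend{\Gamma, \Theta}{\Delta'}$ (by \Lemmaref{lem:elimprop-extension}); the IH gives the subtyping in $[\Omega']\Delta'$, and the declarative \DsubImpL or \DsubWithR rule composes this with the obligation on $P$, whose discharge follows from \Lemmaref{lem:elimeq-soundness}. For \SubImpliesL and \SubWithR (checking a proposition), \Lemmaref{lem:checkprop-soundness} yields $\declcheckprop{[\Omega]\Delta}{[\Omega]P}$, which combines with the IH via the matching declarative rule. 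The main bookkeeping difficulty throughout is juggling the three contexts $\Gamma \longrightarrow \Delta \longrightarrow \Omega$ (and intermediate ones between premises) so that substitutions commute; \Lemmaref{lem:completes-confluence} and \Lemmaref{lem:multiple-confluence} are used whenever we must identify $[\Omega]\Gamma$ with $[\Omega]\Delta$ after applying the IH.
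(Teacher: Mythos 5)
Your proposal follows essentially the same route as the paper: induction on the algorithmic derivation, with \SubEquiv discharged by soundness of equivalence plus reflexivity, the polarity-flip rules by the corresponding declarative flips, and \SubAllL by completing the extended context, constructing the witness $[\Omega']\ahat$, and truncating back to $[\Omega]\Delta$ via \Lemmaref{lem:context-partitioning}. Two remarks. First, a large fraction of the cases you enumerate do not exist in this system: the algorithmic subtyping judgment has only \SubEquiv, the four quantifier rules, and the four polarity-flip rules; there are no \SubArr, \SubUnit, \SubExvar, \SubImpliesL, \SubImpliesR, \SubWithL, or \SubWithR cases, because all non-quantifier structure (arrows, $\implies$, $\with$, units, evars) is funneled through \SubEquiv into the type-equivalence judgment, whose soundness you already invoke. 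Your paragraph on proposition checking and elimination within subtyping is therefore vacuous here, though harmlessly so. Second, and more substantively, you misclassify \SubExistsR: it is the mirror image of \SubAllL, not of \SubExistsL --- it introduces a marker $\MonnierComma{\ahat}$ and a fresh existential $\ahat$ serving as the witness for the declarative $\exists$-right rule, so it needs exactly the witness-construction and marker-truncation argument you describe for \SubAllL, not the ``easier'' uvar-introduction argument that handles \SubAllR and \SubExistsL. With that regrouping, the cases that actually occur are all covered by the techniques you describe.
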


\section{Completeness}
\Label{sec:completeness}

We show that the algorithmic system is complete with respect to the declarative system.
As with soundness, we need to show completeness of the auxiliary algorithmic judgments.
We omit the full statements of these lemmas;
as an example, if $[\Omega]\ahat = [\Omega]\tau$ and $\ahat \notin \FV{\tau}$
then $\instjudg{\Gamma}{\ahat}{\tau}{\sort}{\Delta}$.

\subsection{Separation}

To show completeness, we will need to show that wherever the declarative
rule \DeclRecover is applied, we can apply the algorithmic rule \Recover.
Thus, we need to show that \emph{semantic} principality---that
no other type can be given---entails that a type has no free existential variables.

The principality-recovering rules are potentially applicable when we start
with a principal type $A\;\p$ but produce $C\;\OK$, with \DeclAllSpine
changing $\p$ to $\OK$.  Completeness (Thm.\ \ref{fancy:thm:typing-completeness})
will use the ``for all'' part of \DeclRecover, which
quantifies over all types produced by the spine rules under a given declarative
context $[\Omega]\Gamma$.
By i.h.\ we get an algorithmic
spine judgment $\spinejudg{\Gamma}{s}{A'}{\p}{C'}{\OK}{\Delta}$.
Since $A'$ is principal, unsolved existentials in $C'$ must have been introduced
within this derivation---they can't be in $\Gamma$ already.  Thus, we might
have $\spinejudg{\ahat:\type}{s}{A'}{\p}{\bhat}{\OK}{\ahat:\type, \bhat:\type}$
where a \DeclAllSpine subderivation introduced $\bhat$, but $\ahat$ can't
appear in $C'$.  We also can't equate $\ahat$ and $\bhat$ in $\Delta$, which
would be tantamount to $C' = \ahat$.  Knowing that unsolved existentials
in $C'$ are ``new'' and independent from those in $\Gamma$
means we can argue that, if there \emph{were} an unsolved existential in $C'$,
it would correspond to an unforced choice in a \DeclAllSpine subderivation,
invalidating the ``for all'' part of \DeclRecover.
Formalizing %
``must have been introduced'' requires several
definitions.

\begin{definition}[Separation]
\Label{def:separation}
  An algorithmic context $\Gamma$ is \emph{separable into} $\Gamma_L \sep \Gamma_R$
  if (1) $\Gamma = (\Gamma_L, \Gamma_R)$
  and (2) for all $(\hypeq{\ahat:\sort}{\tau}) \in \Gamma_R$
               it is the case that $\FEV{\tau} \subseteq \dom{\Gamma_R}$.
\end{definition}

If $\Gamma$ is separable into $\Gamma_L \sep \Gamma_R$, then $\Gamma_R$ is
self-contained in the sense that all existential variables declared in $\Gamma_R$
have solutions whose existential variables are themselves declared in $\Gamma_R$.
Every context $\Gamma$ is separable into $\cdot \sep \Gamma$ and into $\Gamma \sep \cdot$.

\begin{definition}[Separation-Preserving Ext.]
\Label{def:separation-extension}
  Separated context $\Gamma_L \sep \Gamma_R$ extends to $\Delta_L \sep \Gamma_R$,
  written 
  $
      \sepextend{\Gamma_L\!}{\!\Gamma_R}{\Delta_L\!}{\!\Delta_R}
  $,
  if~$\substextend{(\Gamma_L, \Gamma_R)}{(\Delta_L, \Delta_R)}$
  and $\dom{\Gamma_L}  \subseteq  \dom{\Delta_L}$
  and $\dom{\Gamma_R} \subseteq  \dom{\Delta_R}$.
\end{definition}

Separation-preserving extension says that variables from one side of $\sep$ haven't ``jumped''
to the other side.  Thus, $\Delta_L$ may add existential variables to $\Gamma_L$, and $\Delta_R$
may add existential variables to $\Gamma_R$, but no variable from $\Gamma_L$ ends up
in $\Delta_R$ and no variable from $\Gamma_R$ ends up in $\Delta_L$.
It is necessary to write $\sepextend{\Gamma_L}{\Gamma_R}{\Delta_L}{\Delta_R}$
rather than
$\substextend{(\Gamma_L \sep \Gamma_R)}{(\Delta_L \sep \Delta_R)}$,
because only $\sepextendsym$ includes the domain conditions.  For example,
$\substextend{(\ahat \sep \bhat)}{(\ahat, \hypeq{\bhat}{\ahat}) \sep \cdot}$,
but $\bhat$ has jumped to the left of $\sep$ in
the context $(\ahat, \hypeq{\bhat}{\ahat}) \sep \cdot$.

We prove many lemmas about separation, but use only one of them in the subsequent
development (in the \DeclRecover case of typing completeness), and then only
the part for spines.  It says that if we have a spine whose type $A$ mentions only
variables in $\Gamma_R$, then the output context $\Delta$ extends $\Gamma$
and preserves separation, and the output type $C$
mentions only variables in $\Delta_R$:

\begin{lemma*}[Separation---Main]
        If\; $\spinejudg{\Gamma_L{\sep}\Gamma_R}{\!s}{A}{p}{C}{q\!}{\Delta}$
        or  $\recspinejudg{\Gamma_L{\sep}\Gamma_R}{\!s}{A}{p}{C}{q}{\Delta}$
        and
        \\
        $\judgetp{\Gamma_L \sep \Gamma_R}{A\;p}$
        and $\exbasis{\Gamma_R}{A}$
        then
        $\Delta = (\Delta_L \sep \Delta_R)$
        and $\sepextend{\Gamma_L}{\Gamma_R}{\Delta_L}{\Delta_R}$
        and $\exbasis{\Delta_R}{C}$.
\end{lemma*}

\subsection{Completeness of Typing}

Like soundness, completeness has several mutually recursive parts
(see the appendix, p.\ \pageref{fancy:thm:typing-completeness}).

\setcounter{theorem}{10}
\begin{theorem}[Completeness of Algorithmic Typing]
  Given $\substextend{\Gamma}{\Omega}$ s.t.\ $\dom{\Gamma} = \dom{\Omega}$:

    \begin{enumerate}[(i)]
      \item %
           If $\judgetp{\Gamma}{A\;p}$
           and $\declchkjudg{p}{[\Omega]\Gamma}{[\Omega]e}{[\Omega]A}$
           and $p' \moreprincipal p$
           then there exist $\Delta$ and $\Omega'$
           such that
           $\substextend{\Delta}{\Omega'}$
           and $\dom{\Delta} = \dom{\Omega'}$
           and $\substextend{\Omega}{\Omega'}$ 
           and 
           $\chkjudg{p'}{\Gamma}{e}{[\Gamma]A}{\Delta}$.
      
      \item %
           If  $\judgetp{\Gamma}{A\;p}$
           and $\declsynjudg{p}{[\Omega]\Gamma}{[\Omega]e}{A}$
           then there exist $\Delta$, $\Omega'$, $A'$, and $p' \moreprincipal p$
           such that
           $\substextend{\Delta}{\Omega'}$
           and $\dom{\Delta} = \dom{\Omega'}$
           and $\substextend{\Omega}{\Omega'}$
           and $\synjudg{p'}{\Gamma}{e}{A'}{\Delta}$
           and $A' = [\Delta]A'$
           and $A = [\Omega']A'$. 

      \item %
           If $\judgetp{\Gamma}{A\;p}$
           and $\declspinejudg{[\Omega]\Gamma}{[\Omega]s}{[\Omega]A}{p}{B}{q}$
           and $p' \moreprincipal p$
           then there exist $\Delta$, $\Omega'$, $B'$, and $q' \moreprincipal q$
           such that
           $\substextend{\Delta}{\Omega'}$
           and $\dom{\Delta} = \dom{\Omega'}$
           and $\substextend{\Omega}{\Omega'}$
           and $\spinejudg{\Gamma}{s}{[\Gamma]A}{p'}{B'}{q'}{\Delta}$
           and $B' = [\Delta]B'$
           and $B = [\Omega']B'$. 

      \item %
           As part (iii), but with $\spinejudgsym B~\lceil q \rceil \cdots$ and
           $\spinejudgsym B'~\lceil q' \rceil \cdots$.
     \end{enumerate}
\end{theorem}

\mypara{Proof sketch---\DeclRecover case}
By i.h., $\spinejudg{\Gamma}{s}{[\Gamma]A}{\p}{C'}{\OK}{\Delta}$
where $\substextend{\Delta}{\Omega'}$ and $\substextend{\Omega}{\Omega'}$
and $\dom{\Delta} = \dom{\Omega'}$
and $C = [\Omega']C'$.

To apply \Recover, we need to show $\FEV{[\Delta]C'} = \emptyset$.
Suppose, for a contradiction, that $\FEV{[\Delta]C'} \neq \emptyset$.
Construct a variant of $\Omega'$ called $\Omega_2$ that has a different
solution for some $\ahat \in \FEV{[\Delta]C'}$.
By soundness (Thm.~\ref{fancy:thm:typing-completeness}),
$\declspinejudg{[\Omega_2]\Gamma}{[\Omega_2]s}{[\Omega_2]A}{\p}{[\Omega_2]C'}{\OK}$.
Using a separation lemma with the trivial $\Gamma = (\Gamma \sep \cdot)$ we get $\Delta = (\Delta_L \sep \Delta_R)$ and $\sepextend{\Gamma}{\cdot}{\Delta_L}{\Delta_R}$ and $\FEV{C'} \subseteq \dom{\Delta_R}$.
That is, all existentials in $C'$ were introduced within the derivation of the
(algorithmic) spine judgment.  Thus, applying $\Omega_2$ to things gives the
same result as $\Omega$, except for $C'$, giving
$
\declspinejudg{[\Omega]\Gamma}{[\Omega]s}{[\Omega]A}{\p}{[\Omega_2]C'}{\OK}
$.
Now instantiate the ``for all $C_2$'' premise with $C_2 = [\Omega_2]C'$,
giving $C = [\Omega_2]C'$.  But we chose $\Omega_2$ to have a different
solution for $\ahat \in \FEV{C'}$, so we have $C \neq [\Omega_2]C'$:  Contradiction.
Therefore $\FEV{[\Delta]C'} = \emptyset$, so we can apply \Recover.

\section{Discussion and Related Work}
\Label{sec:related}

A staggering amount of work has been done on GADTs and indexed types,
and for space reasons we cannot offer a comprehensive survey of the
literature.  So we compare more deeply to fewer papers, to communicate
our understanding of the design space.

\paragraph{Proof theory and type theory.}
As described in \Sectionref{sec:intro}, there are two logical accounts of
equality---the identity type of Martin-L\"{o}f and the equality type
of \citet{schroeder-heister-equality} and \citet{girard-equality}.
The Girard/Schroeder-Heister equality has a more direct connection to
pattern matching, which is why we make use of it.
\citet{Coquand96:typechecking-dependent-types} pioneered the study of
pattern matching in dependent type theory. One perhaps surprising
feature of Coquand's pattern-matching syntax is that it is strictly
stronger than Martin-L\"{o}f's eliminators. His rules can derive
the uniqueness of identity proofs as well as the
disjointness of constructors. Constructor disjointness is also
derivable from the Girard/Schroeder-Heister equality, because
there is no unifier for two distinct constructors.

In future work, we hope to study the relation between these two
notions of equality in more depth; richer equational theories (such as
the theory of commutative rings or the $\beta\eta$-theory of the
lambda calculus) do not have decidable unification, but it seems
plausible that there are hybrid approaches which might let us retain
some of the convenience of the G/SH equality rule while retaining the
decidability of Martin-L\"{o}f's $\mathsf{J}$ eliminator.

\paragraph{Indexed and refinement types.}
Dependent ML~\citep{Xi99popl} indexed
programs with propositional constraints, extending the
 ML type discipline to 
maintain additional invariants. %
DML collected constraints from the program and
passed them to a constraint solver, a 
technique used by systems like Stardust~\citep{Dunfield07:Stardust}
and liquid types~\citep{Rondon08:LT}. 

\paragraph{From phantom types to GADTs.}
\citet{Leijen99:DSEC} introduced the term \emph{phantom
type} to describe a technique for programming in ML/Haskell where
additional type parameters are used to constrain when values are
well-typed. This idea proved to have many applications, ranging from
foreign function interfaces \citep{Blume01entcs:NLFFI} to 
encoding Java-style subtyping \citep{Fluet06:Phantoms}.
Phantom types allow \emph{constructing} values with constrained
types, but do not easily permit \emph{learning} about type equalities
by \emph{analyzing} them, putting applications such as intensional
type analysis~\citep{Harper95:intensional} out of reach. Both
\citet{Cheney03:FirstClassPhantom} and \citet{Xi03:guarded} proposed
treating equalities as a first-class concept, giving explicitly-typed
calculi for equalities, but without studying
algorithms for type inference. 

\citet{Simonet07:constraint} gave a constraint-based algorithm for
type inference for GADTs. It is this work which first identified the
potential intractibility of type inference arising from the
interaction of hypothetical constraints and unification variables. To
resolve this issue they introduce the notion of \emph{tractable}
constraints (\ie, constraints where hypothetical equations never
contain existentials), and require placing enough annotations that
all constraints are tractable.
In general, this could require annotations on case
expressions, so subsequent work focused on relaxing this
requirement. Though quite different in technical detail,
stratified inference~\citep{Pottier06:stratified}
and \emph{wobbly types}~\citep{PeytonJones06:GADTs} both
work by pushing type information from annotations to case expressions,
with stratified type inference literally moving annotations around,
and wobbly types tracking which parts of a type have no
unification variables.
Modern GHC uses the OutsideIn algorithm~\citep{Vytiniotis11},
which further relaxes the  constraint:
case analysis is permitted as long as it cannot modify what is known about an equation.

In our type system, the checking judgment of the bidirectional
algorithm serves to propagate annotations; our requirement that
the scrutinee of a case expression be principal ensures that no
equations contain unification variables. The result is close in effect to
stratified types, and is less expressive than OutsideIn.
This is a deliberate design choice to keep the meaning of principality---that
only a single type can be inferred for a term---clear and easy to
understand.

To specify the OutsideIn approach, the case rule in our declarative
system should permit scrutinizing an expression if all types that can
be synthesized for it have exactly the same equations, even if they
differ in their monotype parts. To achieve this, we would need to
introduce a relation $C' \sim C$ which checks whether the equational
constraints in $C$ and $C'$ are the same, and then modify the
higher-order premise of the \DeclRecover rule to check that
$C' \sim C$ (rather than $C' = C$, as it is currently). However, we
thought such a spec is harder for programmers to develop an intuition
for than simply saying that a scrutinee must synthesize a unique type.

\citet{Garrigue13} proposed \emph{ambivalent types}, which
are a way of deciding when it is safe to generalize the type of a
function using GADTs. This idea is orthogonal to our calculus, simply
because we do no generalization at all:
\emph{every} polymorphic function takes an annotation. However, \citet{Garrigue13}
also emphasize the importance of \emph{monotonicity}, which 
says that substitution should be stable under subtyping, that is,
giving a more general type should not cause subtyping to fail. This
condition is satisfied by our bidirectional system.

\citet{Karachalias15} developed a coverage algorithm for GADTs that
depends on external constraint solving; we offer a more self-contained
but still logically-motivated approach.

\paragraph{Polarized subtyping.} \citet{Barendregt83} observed that a
program which typechecks under a subtyping discipline can be checked
without subtyping, provided that the program is sufficiently
$\eta$-expanded.  This idea of subtyping as $\eta$-expansion was
investigated in a focused (albeit infinitary) setting by
\citet{Zeilberger09}. Another notion of polarity arises from
considering the (co-, contra-, in-)variance of type constructors. It is
used by \citet{Abel06} to give a version of $F^\omega$ with subtyping,
and \citet{Dolan17} apply this version of polarity to give a complete
type inference algorithm for an ML-style language with subtyping.
Our polarized subtyping judgment is closest in spirit to the work of
\citet{Zeilberger09}. The restriction on our subtyping relation can be
understood in terms of requiring the $\eta$ expansions our subtyping
relation infers to be in a focused normal form.

\paragraph{Extensions.}  To keep our formalization manageable, we left
out some features that would be desirable in practice.  In particular,
we need (1) type constructors which take arguments and (2) recursive
types \citep[chapter 20]{Pierce02:TAPL}.  The issue with both
of these features is that they need to permit instantiating
quantifiers with existentials and other binders, and our system relies
upon monotypes (which do not contain such connectives).
This limitation should create no difficulties in typical
practice if we treat user-defined type constructors like \code{List} as monotypes,
expanding the definition only as needed:
when checking an expression against a user type constructor,
and for pattern matching.
Another extension, which we intend as future work,
is to replace ordinary unification with pattern or nominal unification,
to allow type instantiations containing binders.

Another extension is to increase the amount of type inference
done. For instance, a natural question is whether we can extend the
bidirectional approach to subsume the inference done by the algorithm
of \citet{Damas82}.  On the implementation side, this seems easy---%
to support ML-style type inference,
we can add rules to infer types for values:
\[
\Infer{}
      {\chkjudg{\OK}{\Gamma, \MonnierComma{\ahat}, \ahat, \bhat, x:\ahat}{e}{\bhat}{\Delta, \MonnierComma{\ahat}, \Delta'}
       \\
       \vec{\chat} = \unsolved{\Delta'}
      }
      {\synjudg{\p}{\Gamma}{\fun{x}{e}}{\alltype{\vec{\alpha}}{{[\vec{\alpha}/\vec{\chat}]}{[\Delta']}(\ahat \to \bhat)}}{\Delta}}
\]
This rule adds a marker $\MonnierComma{\ahat}$ to the
context, then checks the body $e$ against the type $\bhat$.
Our output type substitutes away all the solved
existential variables to the right of $\MonnierComma{\ahat}$,
and generalizes over all unsolved variables to the right of the marker.
Using an ordered context gives precise control over the scope of the
existential variables, easily expressing polymorphic generalization.

However, in the presence of generalization, the declarative
specification of type inference no longer strictly specifies the
order of polymorphic quantifiers (i.e.,
$\alltype{\alpha, \beta} \alpha \to \beta \to (\alpha \times \beta)$
and
$\alltype{\beta, \alpha} \alpha \to \beta \to (\alpha \times \beta)$
should be equivalent) and so our principal synthesis would no longer
return types stable up to alpha-equivalence. Fixing this would be
straightforward (by relaxing the definition of type equivalence),
but we have not pursued this because we do not value
let-generalization enough to pay the price of increased complexity in
our proofs.

\begin{acks}
  We thank the anonymous reviewers of this version, and of several previous versions, for their comments. We also thank Soham Chowdhury for his work on implementing the system presented in this paper. This work was partially funded by EPSRC grant EP/N02706X/2. 
\end{acks}

{%
\bibliography{local}
}

\end{document}